\newtheorem{theorem}{Theorem}[section]
\newtheorem{lemma}[theorem]{Lemma}
\newtheorem{proposition}[theorem]{Proposition}
\newtheorem{corollary}[theorem]{Corollary}
\newtheorem{remark}[theorem]{Remark}
\def\R{{\mathbb R}}
\def\F{{\mathbb F}}
\def\Hcal{\mathcal{H}}
\def\Acal{{\mathcal A}}
\def\Fcal{{\mathcal F}}
\def\Hcal{{\mathcal H}}
\def\rmd{{\mathrm d}}
\def\rme{{\mathrm e}}
\title[Optimal Life Insurance Purchase]{Optimal Life Insurance Purchase, Consumption and Investment on a financial market with multi-dimensional diffusive terms}
\author[I. Duarte]{I. Duarte}
\address[I. Duarte]{Centro de Matemática da Universidade do Minho, Braga, Portugal}
\email{isabelduarte@math.uminho.pt}
\author[D. Pinheiro]{D. Pinheiro}
\address[D. Pinheiro]{CEMAPRE, ISEG, Technical University of Lisbon, Lisbon, Portugal}
\email{dpinheiro@iseg.utl.pt}
\author[A. A. Pinto]{A. A. Pinto}
\address[A. A. Pinto]{LIAAD-INESC Porto LA and Dep of Mathematics, Faculty of Science, University of Porto, Portugal}
\email{aapinto@fc.up.pt}
\author[S. R. Pliska]{S. R. Pliska}
\address[S. R. Pliska]{Dept. of Finance, University of Illinois at Chicago, Chicago, IL 60607, USA}
\email{srpliska@uic.edu}
\begin{document}

\begin{abstract}
We introduce an extension to Merton's famous continuous time model of optimal consumption and investment, in the spirit of previous works by Pliska and Ye, to allow for a wage earner to have a random lifetime and to use a portion of the income to purchase life insurance in order to provide for his estate, while investing his savings in a financial market comprised of one risk-free security and an arbitrary number of risky securities driven by multi-dimensional Brownian motion. We then provide a detailed analysis of the optimal consumption, investment, and insurance purchase strategies for the wage earner whose goal is to maximize the expected utility obtained from his family consumption, from the size of the estate in the event of premature death, and from the size of the estate at the time of retirement. We use dynamic programming methods to obtain explicit solutions for the case of discounted constant relative risk aversion utility functions and describe new analytical results which are presented together with the corresponding economic interpretations.
\end{abstract}
\keywords{stochastic optimal control; consumption-investment problems; life-insurance}
\subjclass[2000]{91G10 91G80 93E20 90C39}

\maketitle

\section{Introduction}

We consider the problem faced by a wage earner having to make decisions continuously about three strategies: consumption, investment and life insurance purchase during a given interval of time  $[0,\min\{T,\tau\}]$, where $T$ is a fixed point in the future that we will consider to be the retirement time of the wage earner and $\tau$ is a random variable representing the wage earner's time of death. We assume that the wage earner receives his income at a continuous rate $i(t)$ and that this income is terminated when the wage earner dies or retires, whichever happens first. One of our key assumptions is that the wage earner's lifetime $\tau$ is a random variable and, therefore, the wage earner needs to buy life insurance to protect his family for the eventuality of premature death. The life insurance depends on a insurance premium payment rate $p(t)$ such that if the insured pays $p(t)\cdot\delta  t$ and dies during the ensuing short time interval of length $\delta t$ then the insurance company will pay $p(t)/\eta(t)$ dollars to the insured's estate, where $\eta(t)$ is an amount set in advance by the insurance company. Hence this is like term insurance with an infinitesimal term. We also assume that the wage earner wants to maximize the satisfaction obtained from a consumption process with rate $c(t)$. In addition to consumption and purchase of a life insurance policy, we assume that the wage earner invests the full amount of his savings in a financial market consisting of one risk-free security and a fixed number $N\ge 1$ of risky securities with diffusive terms driven by $M$-dimensional Brownian motion.

The wage earner is then faced with the problem of finding strategies that maximize the utility of (i) his family consumption for all $t \leq \min\{T,\tau\}$; (ii) his wealth at retirement date $T$ if he lives that long; and (iii) the value of his estate in the event of premature death. Various quantitative models have been proposed to model and analyze this kind of problem, at least problems having at least one of these three objectives. This literature is perhaps highlighted by Yarri \cite{Yarri} who considered the problem of optimal financial planning decisions for an individual with an uncertain lifetime as well as by Merton \cite{Merton_1,Merton_2} who emphasized optimal consumption and investment decisions but did not consider life insurance. These two approaches were combined by Richard, whose impressive paper \cite{Richard} uses sophisticated methods at an early date for the analysis of a life-cycle life insurance and consumption-investment problem in a continuous time model. Later, Pliska and Ye \cite{Pliska_Ye_1,Pliska_Ye_2} introduced a continuous-time model that combined the more realistic features of all those in the existing literature and extended the model proposed previously by Richard, the main difference being the choice of the boundary condition, leading to somewhat different economic interpretations of the underlying problem. More precisely, while Richard assumed that the lifetime of the wage earner is limited by some fixed number, the model introduced by Pliska and Ye had the feature that the duration of life is a random variable which takes values in the interval $]0,\infty[$ and is independent of the stochastic process defining the underlying financial market. Moreover, Pliska and Ye made the following refinements to the theory: (i) the planning horizon $T$ is now seen as the moment when the wage earner retires, contrary to Richard's interpretation as a finite upper bound on the lifetime; and (ii) the utility of the wage earner's wealth at the planning horizon $T$ is taken into account as well as the utility of lifetime consumption and the utility of the bequest in the event of premature death. Blanchet-Scalliet et al. paper \cite{Scalliet_Karoui_Jeanblanc_Martellini}  deals with optimal portfolio selection with an uncertainty exit time for a suitable extension of the familiar optimal consumption investment problem of Merton, without considering any kind of life insurance purchase.

Whereas Pliska and Ye's financial market involved only one security that was risky, in the present paper we study the extension where there is an arbitrary (but finite) number of risky securities. The existence of these extra risky securities gives greater freedom for the wage earner to manage the interaction between his life insurance policies and the portfolio containing his savings invested in the financial market. Some examples of these interactions are described below.

Following Pliska and Ye, we use the model of uncertain life found in reliability theory, commonly used for industrial life-testing and actuarial science, to model the uncertain time of death for the wage earner. This enables us to replace Richard's assumption that lifetimes are bounded with the assumption that lifetimes take values in the interval $]0,\infty[$. We then set up the wage earner's objective functional depending on a random horizon $\min\{T,\tau\}$ and transform it to an equivalent problem having a fixed planning horizon, that is,  the wage earner who faces unpredictable death acts as if he will live until some time $T$, but with a subjective rate of time preferences equal to his ``force of mortality'' for his consumption and terminal wealth. This transformation to a fixed planning horizon enables us to state the dynamic programming principle and derive an associated Hamilton-Jacobi-Bellman (HJB) equation. We use the HJB equation to derive the optimal feedback control, that is, the optimal insurance, portfolio and consumption strategies. Furthermore, we obtain explicit solutions for the family of discounted Constant Relative Risk Aversion (CRRA) utilities and examine the economic implications of such solutions.

In the case of discounted CRRA utilities our results generalize those obtained previously by Pliska and Ye. For instance, we obtain: (i) an economically reasonable description for the optimal expenditure for insurance as a decreasing function of the wage earner's overall wealth and a unimodal function of age, reaching a maximum at an intermediate age; (ii) a more controversial conclusion that possibly an optimal solution calls for the wage earner to sell a life insurance policy on his own life late in his career. Nonetheless, the extra risky securities in our model introduce novel features to the wage earner's portfolio and insurance management interaction such as: (i) a young wage earner with small wealth has an optimal portfolio with larger values of volatility and higher expected returns, with the possibility of having short positions in lower yielding securities; and (ii) a wage earner who can buy life insurance policies will choose a more conservative portfolio than a wage earner who is without the opportunity to buy life insurance, the distinction being clearer for young wage earners with low wealth.

This paper is organized as follows. In section \ref{formulation} we describe the problem we propose to address. Namely, we introduce the underlying financial and insurance markets, as well as the problem formulation from the point of view of optimal control. In section \ref{SDP} we see how to use the dynamic programming principle to reduce the optimal control of section \ref{formulation} to one with a fixed planning horizon and then derive an associated HJB equation. We devote section \ref{CRRA} to the case of discounted CRRA utilities. We conclude in section \ref{conclusion}.

\section{Problem formulation}\label{formulation}

Throughout this section, we define the setting in which the wage earner has to make his decisions regarding consumption, investment and life insurance purchase. Namely, we introduce the specifications regarding the financial and insurance markets available to the wage earner. We start by the financial market description, followed by the insurance market and conclude with the definition of a wealth process for the wage earner.

\subsection{The financial market model}

We consider a financial market consisting of one risk-free asset and several risky-assets. Their respective prices $(S_0(t))_{0\le t\le T}$ and $(S_n(t))_{0\le t\le T}$ for $n = 1,...,N$ evolve according to the equations:
\begin{eqnarray*}\label{eq1}
\rmd S_0(t) &=& r(t)S_0(t)\rmd t \ , \hspace{5.2 cm}  S_0(0)=s_0 \ , \nonumber  \\
\rmd S_n(t) &=& \mu_n(t)S_n(t)\rmd t + S_n(t)\sum_{m = 1}^M\sigma_{nm}(t)\rmd W_m(t)\ , \quad S_n(0)=s_n>0 \ ,
\end{eqnarray*}
where $W(t) = (W_1(t),\ldots,W_M(t))^T$ is a standard $M$-dimensional Brownian motion on a probability space $(\Omega,\Fcal,P)$, $r(t)$ is the riskless interest rate, $\mu(t)=(\mu_1(t),\ldots,\mu_N(t))\in\R^N$ is the vector of the risky-assets appreciation rates and $\sigma(t)=(\sigma_{nm}(t))_{1\le n\le N, 1\le m\le M}$ is the matrix of risky-assets volatilities.

We assume that the coefficients $r(t)$, $\mu(t)$ and $\sigma(t)$ are deterministic continuous functions on the interval $[0,T]$. We also assume that the interest rate $r(t)$ is positive for all $t\in[0,T]$ and the matrix $\sigma(t)$ is such that $\sigma \sigma^T$ is nonsingular for Lebesgue almost all $t \in [0,T]$ and
 satisfies the following integrability condition
\begin{equation*}
\sum_{n=1}^N \sum_{m=1}^M \int_0^T \sigma_{nm}^2(t) \rmd t < \infty  \ .
\end{equation*}
Furthermore, we suppose that there exists an $(\Fcal_t)_{0\le t\le T}$-progressively measurable process $\pi(t)\in\R^M$, called the market price of risk, such that for Lebesgue-almost-every $t\in [0, T]$ the risk premium
\begin{equation}\label{rp}
\alpha(t) = (\mu_1(t)-r(t),\ldots,\mu_N(t)-r(t)) \in \R^N
\end{equation}
is related to $\pi(t)$ by the equation
\begin{equation*}
\alpha(t) = \sigma(t)\pi(t)  \qquad \text{a.s. }
\end{equation*}
and is such that the following two conditions hold
\begin{eqnarray*}
&&\int_0^T\left\|\pi(t)\right\|^2<\infty \qquad \text{a.s.}  \\
&&E\left[\exp\left( -\int_0^T \pi(s) \rmd W(s) - \frac{1}{2} \int_0^T \left\|\pi(s)\right\|^2\rmd s \right)\right] = 1  \ .
\end{eqnarray*}
The existence of such a process $\pi(t)$ ensures the absence of arbitrage opportunities in the financial market defined above. Note also that the conditions on the matrix $\sigma$ above do not imply market completeness. See \cite{Karatzas_Shreve_2} for further details on market viability and completeness.

Moreover, throughout the paper we will assume that $(\Omega,\Fcal,P)$ is a filtered probability space and that its filtration $\F=\{\Fcal_t, t\in[0,T]\}$ is the $P$-augmentation of the filtration generated by the Brownian motion $W(t)$, $\sigma\{W(s)  , s\le t\}$ for $t \ge 0$. Each sub-$\sigma$-algebra $\Fcal_t$ represents the information available to any given agent observing the financial market until time $t$.

\subsection{The life insurance market model}

We assume that the wage earner is alive at time $t = 0$ and that his lifetime is a non-negative random variable $\tau$ defined
on the probability space $(\Omega, \Fcal, P)$. Furthermore, we assume that the random variable $\tau$ is independent of the filtration $\mathbb{F}$ and has a distribution function $F:[0,\infty)\rightarrow[0,1]$ with density $f:[0,\infty)\rightarrow\R^+$ so that
\begin{equation*} \label{eq3}
F(t) = \int_0^t f(s)\;\rmd s \ .
\end{equation*}
We define the \emph{survivor function} $\overline{F}:[0,\infty)\rightarrow[0,1]$ as the probability for the wage earner to survive at least until time $t$, i.e.
\begin{equation*} \label{eq8}
\overline{F}(t) = P(\tau \geq t) = 1 - F(t) \ .
\end{equation*}
We shall make use of the \emph{hazard function}, the conditional, instantaneous death rate for the wage earner surviving to time $t$ , that is
\begin{equation} \label{eq9}
\lambda(t) = \underset{\delta t \rightarrow 0}{\textrm{lim}}~\frac{P(t \leq \tau < t + \delta t~|~\tau \geq t)}{\delta t} = \frac{f(t)}{\overline{F}(t)} \ .
\end{equation}
Throughout the paper, we will suppose that the hazard function $\lambda:[0,\infty)\rightarrow\mathbb{R}^+$ is a continuous and deterministic function such that
\begin{equation*}
\int_0^\infty\lambda(t)\;\rmd t = \infty \ .
\end{equation*}
These two concepts introduced above are standard in the context of reliability theory and actuarial science. In our case, such concepts enable us to consider an optimal control problem with a stochastic planning horizon and restate it as one with a fixed horizon.

Due the uncertainty concerning his lifetime, the wage earner buys life insurance to protect his family for the eventuality of premature death.  The life insurance is available continuously and the wage earner buys it by paying a \emph{insurance premium payment rate} $p(t)$ to the insurance company. The insurance contract is like term insurance, with an infinitesimally small term. If the wage earner dies at time $\tau<T$ while buying insurance at the rate $p(t)$, the insurance company pays an amount $p(\tau)/\eta(\tau)$ to his estate, where $ \eta:[0,T]\rightarrow\mathbb{R}^+$ is a continuous and deterministic function which we call the \emph{insurance premium-payout ratio} and is regarded as fixed by the insurance company. The contract ends when the wage earner dies or achieves retirement age, whichever happens first. Therefore, the wage earner's total legacy to his estate in the event of a premature death at time  $\tau<T$ is given by
\begin{equation*}
Z(\tau) = X(\tau) + \frac{p(\tau)}{\eta(\tau)} \ ,
\end{equation*}
where $X(t)$ denotes the wage earner's savings at time $t$.

\subsection{The wealth process}

We assume that the wage earner receives an income $i(t)$ at a continuous rate during the period $[0, \textrm{min}\{T, \tau\}]$, i.e., the income will be terminated either by his death or his retirement, whichever happens first. Furthermore, we assume that $i:[0,T]\rightarrow\mathbb{R}^+$ is a deterministic Borel-measurable function satisfying the integrability condition
\begin{equation*}
\int_0^T i(t)\; \rmd t < \infty \ .
\end{equation*}

The \emph{consumption process} $(c(t))_{0\le t\le T}$ is a $(\Fcal_t)_{0\le t\le T}$-progressively measurable nonnegative process satisfying the following integrability condition for the investment horizon $T>0$
\begin{equation*}
\int_0^T c(t) \; \rmd t < \infty  \qquad \textrm{a.s.} \ .
\end{equation*}
We assume also that the insurance premium payment rate $(p(t))_{0\le t \le T}$ is a $(\Fcal_t)_{0\le t\le T}$-predictable process, i.e., $p(t)$ is measurable with respect to the smallest $\sigma$-algebra on $\R^+\times\Omega$ such that all left-continuous and adapted processes are measurable. In a intuitive manner, a predictable process can be described as such that its values are ``known'' just in advance of time.

For each $n=0,1,...,N$ and $t\in[0,T]$, let $\theta_n(t)$ denote the fraction of the wage earner's wealth allocated to the asset $S_n$ at time $t$. The \emph{portfolio process} is then given by $\Theta(t)=\left(\theta_0(t),\theta_1(t),\cdots,\theta_N(t)\right)\in\R^{N+1}$, where
\begin{equation}\label{soma_thetas}
\sum_{n=0}^N\theta_n(t) = 1 \ , \qquad 0\le t\le T \ .
\end{equation}
We assume that the portfolio process is $(\Fcal_t)_{0\le t\le T}$-progressively measurable and that, for the fixed investment horizon $T > 0$, we have that
\begin{equation*}
\int_0^T \left\|\Theta(t)\right\|^2\; \rmd t < \infty  \qquad \textrm{a.s.} \ ,
\end{equation*}
where $\left\|\cdot\right\|$ denotes the Euclidean norm in $\R^{N+1}$.

The \emph{wealth process} $X(t)$, $t \in [0, \textrm{min}\{T,\tau\}]$, is then defined by
\begin{equation} \label{eq4}
X(t) = x + \int_0^t \left[i(s)-c(s)-p(s)\right]\;\rmd s+\sum_{n = 0}^N\int_0^t\frac{\theta_n(s)X(s)}{S_n(s)}\;\rmd S_n(s) \ ,
\end{equation}
where $x$ is the wage earner's initial wealth. This last equation can be rewritten in the differential form
\begin{eqnarray} \label{eq5}
\rmd X(t) & = & \left(i(t)-c(t)-p(t)+\bigg(\theta_0(t)r(t)+\sum_{n = 1}^N{\theta_n(t)\mu_n(t)}\bigg)X(t)\right)\rmd t \nonumber \\
& & +\sum_{n=1}^N{\theta_n(t)X(t)\sum_{m = 1}^M{\sigma_{nm}(t)\rmd W_m(t)}} \ ,
\end{eqnarray}
where $0 \le t \leq \min\{\tau,T\}$.

Using relation \eqref{soma_thetas} we can always write $\theta_0(t)$ in terms of $\theta_1(t),\ldots,\theta_N(t)$, so from now on we will define the portfolio process in terms of the \emph{reduced portfolio process} $\theta(t)\in\R^N$ given by 
\begin{equation*}
\theta(t)=\left(\theta_1(t), \theta_2(t), \cdots,\theta_N(t)\right)\in\R^{N} \ .
\end{equation*}

\subsection{The optimal control problem}

The wage earner is faced with the problem of finding strategies that maximize the expected utility obtained from:
\begin{itemize}
\item[(a)] his family consumption for all $t \leq \min\{T,\tau\}$;
\item[(b)] his wealth at retirement date $T$ if he lives that long;
\item[(c)] the value of his estate in the event of premature death.
\end{itemize}

This problem can be formulated by means of optimal control theory: the wage earner's goal is to maximize some cost functional
subject to (i) the (stochastic) dynamics of the state variable, i.e., the dynamics of the wealth process $X(t)$ given by \eqref{eq4}; (ii) constraints on the control variables, i.e., the consumption process $c(t)$, the premium insurance rate $p(t)$ and the portfolio process $\theta(t)$; and (iii) boundary conditions on the state variables.

Let us denote by $\Acal(x)$ the set of all admissible decision strategies, i.e., all admissible choices for the control variables $\nu = (c,p,\theta)\in\R^{N+2}$. The dependence of $\Acal(x)$ on $x$ denotes the restriction imposed on the wealth process by the boundary condition $X(0)=x$. In particular, $\Acal(x)$ must be such that for each $\nu\in\Acal(x)$ the corresponding wealth process satisfies $X(t)\geq 0$ for all $t \leq \min\{T,\tau\}$.

The wage earner's problem can then be restated as follows: find a strategy $\nu=(c,p,\theta)\in\Acal(x)$ which maximizes the expected utility
\begin{eqnarray} \label{eq7}
V(x) = \underset{\nu \in \Acal(x)}{\sup}\; E_{0,x} \left[\int_0^{T \wedge \tau}U(c(s),s) \; \rmd s + B(Z(\tau),\tau)I_{\{\tau \leq T\}}  + W(X(T))I_{\{\tau > T\}} \;   \right] \ ,
\end{eqnarray}
where $T \wedge \tau=\min\{T,\tau\}$, $I_A$ denotes the indicator function of event $A$, $U(c,\cdot)$ is the utility function describing the wage earner's family preferences regarding consumption in the time interval $[0,\min\{T,\tau\}]$, $B(Z,\cdot)$ is the utility function for the size of the wage earners's legacy in case $\tau\le T$, and $W(X)$ is the utility function for the terminal wealth at time $t=T$ in the case $\tau>T$.

We suppose that $U$ and $B$ are strictly concave on their first variable and that $W$ is strictly concave on its sole variable. In section \ref{CRRA} we specialize our analysis to the case where the wage earner's preferences are described by discounted CRRA utility functions.

\section{Stochastic optimal control}\label{SDP}

In this section we use the techniques in \cite{Pliska_Ye_1,Pliska_Ye_2} to restate the stochastic optimal control problem formulated in the preceding section as one with a fixed planning horizon and to derive a dynamic programming principle and the corresponding HJB equation.

\subsection{Dynamic programming principle}

Let us denote by $\Acal(t,x)$ the set of admissible decision strategies $\nu=(c,p,\theta)$ for the dynamics of the wealth process with boundary condition $X(t)=x$. For any $\nu\in\Acal(t,x)$ we define the functional
\begin{eqnarray*}
J(t,x;\nu) =   E_{t,x} \left[\int_t^{T \wedge \tau}U(c(s),s) \; \rmd s + B(Z(\tau),\tau)I_{\{\tau \leq T\}}  + W(X(T))I_{\{\tau > T\}} \; \Bigl|\;\tau>t ,\Fcal_t  \right]
\end{eqnarray*}
and introduce the associated value function:
\begin{equation*}
V(t,x) = \underset{\nu \in \Acal(t,x)}{\sup}\; J(t,x;\nu) \ .
\end{equation*}

Given some initial condition $(t,x) \in[0,T]\times\R$, we have that $\nu^*\in\Acal(t, x)$ is an optimal control if $V(t,x) = J(t, x; \nu^*)$.

The following lemma is the key tool to restating the control problem above as an equivalent one with a fixed planning horizon. See \cite{Ye} for a proof.
\begin{lemma}\label{lemma1}
Suppose that the utility function $U$ is either nonnegative or nonpositive. If the random variable $\tau$ is independent of the filtration $\F$, then
\begin{eqnarray*} \label{eq15}
J(t,x;\nu) =   E_{t,x} \left[\int_t^{T} \overline{F}(s,t)U(c(s),s) + f(s,t)B(Z(s),s)  \; \rmd s + \overline{F}(T,t)W(X(T)) \; \Bigl|\; \Fcal_t  \right]  \ ,
\end{eqnarray*}
where $\overline{F}(s,t)$ is the conditional probability for the wage earner's death to occur at time $s$ conditional upon the wage earner being alive at time $t \leq s$ and $f(s,t)$ is the corresponding conditional probability density function.
\end{lemma}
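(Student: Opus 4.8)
The plan is to exploit the independence of $\tau$ and $\F$ in order to integrate out the random lifetime term by term, replacing each indicator depending on $\tau$ by the appropriate conditional survival probability or conditional density. The basic tool is the following factorization: if $\Phi$ is an $\Fcal_T$-measurable random variable and $h$ a bounded Borel function, then the independence of $\tau$ and $\F$, together with the definition of conditioning on the event $\{\tau>t\}$, gives
\begin{equation*}
E\left[\Phi\, h(\tau)\,\big|\,\Fcal_t,\tau>t\right] \;=\; E\left[\Phi\,\big|\,\Fcal_t\right]\cdot E\left[h(\tau)\,\big|\,\tau>t\right],
\end{equation*}
while the conditional law of $\tau$ on $\{\tau>t\}$ has survivor function $\overline{F}(s,t)=\overline{F}(s)/\overline{F}(t)$ and density $f(s,t)=f(s)/\overline{F}(t)$ on $[t,\infty)$. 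I would first establish this identity (a routine computation: test against sets in $\Fcal_t$, use that $\{\tau>t\}$ is independent of $\Fcal_t$, and extend from bounded $h$ to the functionals below by a monotone-class or truncation argument).

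I would then handle the three terms of $J(t,x;\nu)$ in turn. For the terminal-wealth term, $W(X(T))$ is $\Fcal_T$-measurable and $I_{\{\tau>T\}}$ is a function of $\tau$ alone, so the factorization gives immediately $E[W(X(T))I_{\{\tau>T\}}\mid\Fcal_t,\tau>t]=\overline{F}(T,t)\,E[W(X(T))\mid\Fcal_t]$. For the bequest term, recall that $Z(s)=X(s)+p(s)/\eta(s)$; conditioning on $\{\tau=s\}$ (legitimate because, by independence, the conditional law of $\tau$ on $\{\tau>t\}$ has density $f(\cdot,t)$) and then moving the expectation inside the resulting $\rmd s$-integral yields $E[B(Z(\tau),\tau)I_{\{\tau\le T\}}\mid\Fcal_t,\tau>t]=\int_t^T f(s,t)\,E[B(Z(s),s)\mid\Fcal_t]\,\rmd s$. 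For the consumption term, I would first note that on $\{\tau\ge t\}$ one has $\int_t^{T\wedge\tau}U(c(s),s)\,\rmd s=\int_t^T U(c(s),s)\,I_{\{s\le\tau\}}\,\rmd s$, then interchange $E[\,\cdot\mid\Fcal_t,\tau>t]$ with $\int_t^T\cdot\,\rmd s$, and finally apply the factorization for each fixed $s$ to obtain $E[U(c(s),s)I_{\{s\le\tau\}}\mid\Fcal_t,\tau>t]=\overline{F}(s,t)\,E[U(c(s),s)\mid\Fcal_t]$.

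Summing the three contributions and pulling the single conditional expectation $E[\,\cdot\mid\Fcal_t]$ back outside the $\rmd s$-integral produces the claimed identity. The main obstacle is the interchange of expectation and time integration in the consumption term: this is exactly where the hypothesis that $U$ has a constant sign is used, since it allows one to invoke Tonelli's theorem for the nonnegative (or nonpositive) integrand $(s,\omega)\mapsto U(c(s),s)\,I_{\{s\le\tau\}}$ with no a priori integrability assumption. The only other point demanding care is to keep the conditioning on the event $\{\tau>t\}$ separated from the conditioning on $\Fcal_t$ when independence is invoked; once the factorization above is in place, the rest is bookkeeping, and full details can be found in \cite{Ye}.
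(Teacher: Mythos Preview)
The paper does not supply its own proof of this lemma; it simply refers the reader to \cite{Ye}. Your argument is correct and is precisely the standard derivation one would expect to find in that reference: factor out $\tau$ via independence, handle the three pieces of the functional separately, and invoke Tonelli on the running-cost term using the sign hypothesis on $U$. There is therefore nothing substantive in the paper to compare against, and your sketch stands on its own.
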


Using the previous lemma, one can state the following dynamic programming principle, obtaining a recursive relationship for the maximum expected utility as a function of the wage earner's age and his wealth at that time. See \cite{Ye} for a proof.

\begin{lemma}[Dynamic programming principle]\label{DPP}
For $0 \le t < s < T$, the maximum expected utility $V(t,x)$ satisfies the recursive relation 
\begin{eqnarray*} \label{eq16}
V(t,x) =  \underset{\nu \in \Acal(t,x)}{\sup}E\Biggl[&&\exp\left(-\int_t^s \lambda(u)\;\rmd u\right)V(s,X(s))  \nonumber \\
&&+ \int_t^s \overline{F}(u,t)U(c(u),u) + f(u,t)B(Z(u),u)  \; \rmd u \;\Bigl|\;\mathcal{F}_t\Biggr] \ .
\end{eqnarray*}
\end{lemma}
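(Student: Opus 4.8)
The plan is to derive the recursion directly from the representation of $J(t,x;\nu)$ provided by Lemma~\ref{lemma1}, combined with the tower property of conditional expectation and the (strong) Markov property of the wealth process. First I would fix $0 \le t < s < T$ and an admissible control $\nu \in \Acal(t,x)$. Starting from
\[
J(t,x;\nu) = E_{t,x}\left[\int_t^{T}\overline{F}(u,t)U(c(u),u) + f(u,t)B(Z(u),u)\,\rmd u + \overline{F}(T,t)W(X(T))\;\Bigl|\;\Fcal_t\right],
\]
I would split the time integral at $s$, writing $\int_t^T = \int_t^s + \int_s^T$, and similarly isolate the terminal term. The key observation is the multiplicative (Chapman--Kolmogorov) identity for the conditional survivor function: for $t \le s \le u$,
\[
\overline{F}(u,t) = \frac{\overline{F}(u)}{\overline{F}(t)} = \frac{\overline{F}(u)}{\overline{F}(s)}\cdot\frac{\overline{F}(s)}{\overline{F}(t)} = \overline{F}(u,s)\,\overline{F}(s,t),
\]
and likewise $f(u,t) = f(u,s)\,\overline{F}(s,t)$, and $\overline{F}(T,t) = \overline{F}(T,s)\,\overline{F}(s,t)$. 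Since $\overline{F}(s,t) = \exp\bigl(-\int_t^s \lambda(u)\,\rmd u\bigr)$ depends only on deterministic data (the hazard function) and is $\Fcal_t$-measurable, it factors out of the $u \ge s$ portion of the expectation.

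Next I would condition the $u \ge s$ portion on $\Fcal_s$ using the tower property $E_{t,x}[\,\cdot\,|\,\Fcal_t] = E_{t,x}[\,E[\,\cdot\,|\,\Fcal_s]\,|\,\Fcal_t]$. On the event structure conditional on $\Fcal_s$, after pulling out the deterministic factor $\overline{F}(s,t)$, the inner conditional expectation of
\[
\int_s^T \overline{F}(u,s)U(c(u),u) + f(u,s)B(Z(u),u)\,\rmd u + \overline{F}(T,s)W(X(T))
\]
is precisely $J(s,X(s);\nu)$ by the Markov property of $X$ and the fact that the coefficients $r,\mu,\sigma,i,\lambda,\eta$ are deterministic functions of time; that is, the continuation of an admissible strategy from time $s$ onward, evaluated at the state $X(s)$, reproduces the functional $J$ with initial data $(s,X(s))$. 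Taking the supremum over $\nu$ — with the standard argument that optimizing the continuation strategy after $s$ independently of the behavior on $[t,s]$ is legitimate, because $\Acal(t,x)$ restricted to $[s,T]$ with the realized value $X(s)$ coincides with $\Acal(s,X(s))$ — upgrades $J(s,X(s);\nu)$ to $V(s,X(s))$ inside the expectation, yielding
\[
V(t,x) = \sup_{\nu\in\Acal(t,x)} E\left[\exp\left(-\int_t^s\lambda(u)\,\rmd u\right)V(s,X(s)) + \int_t^s \overline{F}(u,t)U(c(u),u) + f(u,t)B(Z(u),u)\,\rmd u\;\Bigl|\;\Fcal_t\right].
\]

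The main obstacle, and the step deserving the most care, is the rigorous justification of this last supremum interchange — the ``$\ge$'' direction requires showing that concatenating a near-optimal control for the $(s,X(s))$-subproblem with any admissible control on $[t,s]$ yields an admissible control for the $(t,x)$-problem (measurable selection / pasting of controls, respecting the state constraint $X(u)\ge 0$), while the ``$\le$'' direction requires that restricting any admissible $\nu\in\Acal(t,x)$ to $[s,T]$ gives an admissible strategy for the subproblem. The sign hypothesis on $U$ in Lemma~\ref{lemma1} ensures the relevant expectations are well-defined (possibly $\pm\infty$) so that Fubini/Tonelli and the conditioning manipulations are valid without integrability worries. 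Since the excerpt defers to \cite{Ye} for the proof, I would present the argument at the level of these structural steps, citing the measurable-selection machinery for the pasting lemma rather than reconstructing it, and emphasize that the substantive content is the factorization $\overline{F}(u,t) = \overline{F}(u,s)\overline{F}(s,t)$ which converts the random-horizon problem's conditional densities into the exponential discount factor $\exp(-\int_t^s\lambda)$.
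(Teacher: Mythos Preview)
The paper does not actually supply its own proof of this lemma; immediately after the statement it says ``See \cite{Ye} for a proof'' and moves on. Your outline is the standard derivation one would expect to find in that reference: split the fixed-horizon representation of Lemma~\ref{lemma1} at the intermediate time $s$, use the multiplicative identity $\overline{F}(u,t)=\overline{F}(u,s)\overline{F}(s,t)$ (equivalently $f(u,t)=f(u,s)\overline{F}(s,t)$) to factor out the deterministic discount $\exp(-\int_t^s\lambda)$, apply the tower property together with the Markov structure of $X$ to recognize the continuation as $J(s,X(s);\cdot)$, and then invoke the usual pasting/measurable-selection argument to pass to $V(s,X(s))$ under the supremum. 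You have also correctly flagged the only genuinely delicate step, namely the two-sided supremum interchange and the admissibility of concatenated controls. There is nothing to correct here; your sketch matches the approach the paper defers to.
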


The transformation to a fixed planning horizon can then be given the following interpretation: a wage earner facing unpredictable death acts as if he will live until time $T$, but with a subjective rate of time preferences equal to his ``force of mortality'' for the consumption of his family and his terminal wealth.

\subsection{Hamilton-Jacobi-Bellman equation}

The dynamic programming principle enables us to state the HJB equation, a second-order partial differential equation whose ``solution'' is the value function of the optimal control problem under consideration here. The techniques used in the derivation of the HJB equation and the proof of the next theorem follow closely those in \cite{Fleming_Soner,Ye,Yong_Zhou}.

\begin{theorem}\label{optimal}
Suppose that the maximum expected utility $V$ is of class $C^2$. Then $V$ satisfies the Hamilton-Jacobi-Bellman equation
\begin{equation} \label{eq17}
\begin{cases}
V_t(t,x) - \lambda(t) V(t,x) + \underset{\nu \in \Acal(t,x)}{\sup} \Hcal(t,x;\nu) = 0 \\
V(T,x) = W(x)
\end{cases} \ ,
\end{equation}
where the Hamiltonian function $\Hcal$ is given by
\begin{eqnarray*}
\Hcal(t,x;\nu)  & = & \left(i(t) - c - p + \left(r(t) + \sum_{n = 1}^N{\theta_n(\mu_n(t)-r(t))}\right)x \right)V_x(t,x) \\
& & + \frac{x^2}{2} \sum_{m=1}^{M} \left( \sum_{n=1}^{N} \theta_n  \sigma_{nm}(t)\right)^2 V_{xx}(t,x)
    + \lambda(t)B\left( x + \frac{p}{\eta(t)},t\right) + U(c,t) \ .
\end{eqnarray*}
Moreover, an admissible strategy $\nu^*=(c^*, p^*, \theta^*)$ whose corresponding wealth is $X^*$ is optimal if and only if for a.e. $s\in[t,T]$ and $P$-a.s. we have
\begin{eqnarray} \label{eq18}
V_t(s,X^*(s)) - \lambda(s) V(s,X^*(s)) + \Hcal(s,X^*(s);\nu^*) = 0 \ .
\end{eqnarray}
\end{theorem}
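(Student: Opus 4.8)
The plan is to derive the Hamilton--Jacobi--Bellman equation from the dynamic programming principle of Lemma~\ref{DPP} together with It\^o's formula, and then to read off the verification (``if and only if'') statement from the same identity taken over the whole interval $[t,T]$. First I would fix $(t,x)$ and an admissible $\nu=(c,p,\theta)\in\Acal(t,x)$ with associated wealth process $X$, and apply It\^o's formula to the process $v\mapsto\overline F(v,t)V(v,X(v))$ on $[t,s]$, using $V\in C^2$ and the wealth dynamics~\eqref{eq5}. Since $\overline F(v,t)=\exp\bigl(-\int_t^v\lambda(u)\,\rmd u\bigr)$ satisfies $\partial_v\overline F(v,t)=-\lambda(v)\overline F(v,t)$ and $f(v,t)=\lambda(v)\overline F(v,t)$, the $-\lambda V$ term, the generator terms coming from differentiating $V$ along $X$, and the running consumption and bequest terms of the transformed functional of Lemma~\ref{lemma1} recombine exactly into $V_t-\lambda V+\Hcal$, which should yield
\begin{align*}
&\overline F(s,t)V(s,X(s))+\int_t^s\bigl[\overline F(v,t)U(c(v),v)+f(v,t)B(Z(v),v)\bigr]\,\rmd v \\
&\qquad = V(t,x)+\int_t^s\overline F(v,t)\bigl(V_t-\lambda V+\Hcal\bigr)(v,X(v);\nu)\,\rmd v+M_s,
\end{align*}
where $M_s=\sum_{n=1}^N\sum_{m=1}^M\int_t^s\overline F(v,t)\theta_n(v)X(v)\sigma_{nm}(v)V_x(v,X(v))\,\rmd W_m(v)$ is a local martingale and $\bigl(V_t-\lambda V+\Hcal\bigr)(v,X(v);\nu)$ abbreviates $V_t(v,X(v))-\lambda(v)V(v,X(v))+\Hcal(v,X(v);\nu)$.

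Next I would establish the PDE. Localizing $M_s$ by a reducing sequence of stopping times, taking expectations, and passing to the limit by monotone/dominated convergence --- here the sign hypotheses on $U$, $B$ and $W$ that also underlie Lemma~\ref{lemma1} would be used --- should give $E[M_s]=0$. Substituting the identity above into Lemma~\ref{DPP} and taking the supremum over $\nu\in\Acal(t,x)$ would leave
\begin{equation*}
\sup_{\nu\in\Acal(t,x)}E\!\left[\int_t^s\overline F(v,t)\bigl(V_t-\lambda V+\Hcal\bigr)(v,X(v);\nu)\,\rmd v\right]=0 .
\end{equation*}
For fixed $\nu$ this expectation is $\le0$; dividing by $s-t$, letting $s\downarrow t$ and using continuity of the integrand at $v=t$ (which follows from the continuity of $r,\mu,\sigma,\lambda,i$ and $V\in C^2$) gives $V_t(t,x)-\lambda(t)V(t,x)+\Hcal(t,x;\nu)\le0$, and since any admissible control value at time $t$ is attained by some admissible strategy on a short interval, the supremum over control values satisfies $V_t-\lambda V+\sup_\nu\Hcal\le0$. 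For the reverse inequality I would pick, for each $\varepsilon>0$, an $\varepsilon(s-t)$-optimal strategy in Lemma~\ref{DPP}, turning the estimate around to $V_t-\lambda V+\sup_\nu\Hcal\ge-\varepsilon$; letting $\varepsilon\downarrow0$ would give the HJB equation~\eqref{eq17}. The boundary condition $V(T,x)=W(x)$ is immediate from $\overline F(T,T)=1$ and $X(T)=x$ in the definition of $J(T,x;\nu)$.

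For the verification statement I would take $s=T$ in the identity of the first step (this uses only It\^o's formula, not Lemma~\ref{DPP}), replace $V(T,X(T))$ by $W(X(T))$ via the boundary condition, and use $E[M_T]=0$; comparing with the expression for $J(t,x;\nu)$ furnished by Lemma~\ref{lemma1} gives
\begin{equation*}
J(t,x;\nu)=V(t,x)+E\!\left[\int_t^T\overline F(v,t)\bigl(V_t-\lambda V+\Hcal\bigr)(v,X(v);\nu)\,\rmd v\right]
\end{equation*}
for every $\nu\in\Acal(t,x)$. By the PDE the integrand is $\le0$, so $J(t,x;\nu)\le V(t,x)$, with equality if and only if $\overline F(v,t)\bigl(V_t-\lambda V+\Hcal\bigr)(v,X(v);\nu)=0$ for Lebesgue-almost-every $v\in[t,T]$, $P$-almost surely; since $\overline F>0$ this is exactly~\eqref{eq18}. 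Hence $\nu^*$ is optimal if and only if \eqref{eq18} holds.

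The conceptual structure is the standard dynamic-programming argument, so the real work will be technical. The main obstacle will be justifying $E[M_s]=0$ although $V$ is assumed only $C^2$, with no a priori control on $V_x$; this forces the localization argument and the use of the sign conditions on the utilities, exactly the place where the hypotheses inherited from Lemma~\ref{lemma1} enter. A secondary difficulty will be the passage from the supremum over control \emph{processes} in Lemma~\ref{DPP} to the pointwise supremum over control \emph{values} in $\Hcal$, together with the construction of the $\varepsilon(s-t)$-optimal strategies used for the reverse inequality, which will need a measurable-selection/approximation argument and the continuity of the model coefficients. Everything else --- the It\^o computation, the recombination into $\Hcal$, and the limit $s\downarrow t$ --- should be routine.
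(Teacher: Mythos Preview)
Your proposal is correct and follows essentially the same route as the paper: It\^o's formula applied to the discounted value function combined with the dynamic programming principle to obtain the HJB equation, and then the same It\^o identity on $[t,T]$ together with Lemma~\ref{lemma1} to get the verification criterion. The only cosmetic difference is that the paper applies It\^o to $V(t+h,X(t+h))$ and handles the survival factor via a first-order Taylor expansion of $\exp\bigl(-\int_t^{t+h}\lambda\bigr)$, whereas you apply It\^o directly to the product $\overline F(v,t)V(v,X(v))$; your treatment of the localization and of the passage from process-controls to pointwise controls is in fact more careful than the paper's, which proceeds formally at those steps.
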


\begin{proof}
We divide the proof in two parts: we start by establishing the HJB equation \eqref{eq17} and then we will prove that the equality \eqref{eq18} holds.

Recall that the wealth process $X(t)$, $t \in [0,\textrm{min}\{T,\tau \}]$, satisfies the stochastic differential equation \eqref{eq5}.
Using Itô's lemma, we obtain that
\begin{equation}\label{dem1}
V(t+h,X(t+h)) = V(t,X(t)) + \int_t^{t+h} a(u,X(u))\; \rmd u + \sum_{m = 1}^M \int_t^{t+h} b_m(u,X(u))\; \rmd W_m(u) \ , 
\end{equation}
where the integrand functions $a$ and $b_m$, $m=1,\ldots,M$, are given by
\begin{eqnarray}\label{a_b}
a(t,X(t)) &=&   V_t(t,X(t)) + \left(i(t)-c-p + X(t)\left(r(t)+\sum_{n = 1}^N{\theta_n\left(\mu_n(t)-r(t)\right)}\right)\right) V_x(t,X(t)) \nonumber  \\
& & + \frac{1}{2} X^2(t)V_{xx}(t,X(t)) \sum_{m = 1}^M \left(\sum_{n=1}^N{\theta_n\sigma_{nm}(t)}\right)^2  \\
b_m(t,X(t))&=& X(t)V_x(t,X(t))\sum_{n=1}^N{\theta_n\sigma_{nm}(t)}  \ , \qquad m = 1, \ldots, M  \nonumber  \ .
\end{eqnarray}
Using the dynamic programming principle of Lemma \ref{DPP} and setting $s=t+h$ we get the identity
\begin{eqnarray}\label{eq_v}
V(t,x) =  \underset{\nu \in \Acal(t,x)}{\sup}E\Biggl[&&\exp\left(-\int_t^{t+h} \lambda(u)\;\rmd u\right)V(t+h,X(t+h))  \nonumber \\
&&+ \int_t^{t+h} \overline{F}(u,t)U(c(u),u) + f(u,t)B(Z(u),u)  \; \rmd u \;\Bigl|\;\mathcal{F}_t\Biggr] \ .
\end{eqnarray}
Noting that for small enough values of $h$ the following inequalities hold
\begin{eqnarray}\label{taylor}
\textrm{exp} \left(-\int_t^{t+h} \lambda(v)~\rmd v \right) \in 1 - \lambda(t)h \pm \mathcal{O}(h^2) 
\end{eqnarray}
and combining the equality \eqref{eq_v} with the inequalities \eqref{taylor} above, we obtain
\begin{eqnarray*}
0 & \in & \underset{\nu \in \Acal(t,x)}{\sup}E\Biggl[(1 - \lambda(t)h \pm \mathcal{O}(h^2))V(t+h,X(t+h)) - V(t,x) \\
&&+ \int_t^{t+h} \overline{F}(u,t)U(c(u),u) + f(u,t)B(Z(u),u)  \; \rmd u \;\Bigl|\;\mathcal{F}_t\Biggr] \ .
\end{eqnarray*}
Substituting \eqref{dem1} in the last equation we obtain
\begin{eqnarray*}
0 & \in & \underset{\nu \in \Acal(t,x)}{\sup}E\Biggl[\left(1 - \lambda(t)h \pm \mathcal{O}(h^2)\right)\bigg(V(t,X(t)) + \int_t^{t+h} a(u,X(u))\; \rmd u\bigg) \\
&& + \left(1 - \lambda(t)h \pm \mathcal{O}(h^2)\right)\sum_{m = 1}^M \int_t^{t+h} b_m(u,X(u))\; \rmd W_m(u) \\
&&- V(t,x) + \int_t^{t+h} (1-F(u,t))U(c(u),u) + \lambda(t)(1-F(u,t))B(Z(u),u)  \; \rmd u \;\Bigl|\;\mathcal{F}_t\Biggr] \ .
\end{eqnarray*}
Dividing the previous equation by $h$ and letting $h$ go to zero we obtain the equality
\begin{eqnarray*}
0 & = & \underset{\nu \in \Acal(t,x)}{\sup}\Biggl[V_t(t,x) - \lambda(t) V(t,x) +
\left(i(t) - c - p + \left(r(t)+\sum_{n = 1}^N{\theta_n(\mu_n(t)-r(t))}\right)x \right)V_x(t,x) \\
& & + \frac{x^2}{2} \sum_{m=1}^{M} \left( \sum_{n=1}^{N} \theta_n \sigma_{nm}(t)\right)^2 V_{xx}(t,x)
 + \lambda(t)B\left(Z(t),t \right) + U(c,t) \;\Bigl|\;\mathcal{F}_t\Biggr] \ .
\end{eqnarray*}
Letting $Z(t) = x + \frac{p}{\eta(t)}$, the dynamic programming equation becomes
\begin{eqnarray*}
0 & = & \underset{\nu \in \Acal(t,x)}{\sup}\Biggl[V_t(t,x) - \lambda(t) V(t,x) +
\left(i(t) - c - p + \left(r(t)+\sum_{n = 1}^N{\theta_n(\mu_n(t)-r(t))}\right)x \right)V_x(t,x) \\
& & + \frac{x^2}{2} \sum_{m=1}^{M} \left( \sum_{n=1}^{N} \theta_n \sigma_{nm}(t)\right)^2 V_{xx}(t,x)
 + \lambda(t)B\left(x + \frac{p}{\eta(t)},t \right) + U(c,t)  \;\Bigl|\;\mathcal{F}_t\Biggr] \ .
\end{eqnarray*}
Finally, noting that $V_t(t,x) - \lambda(t) V(t,x)$ does not depend on $\nu$, we obtain the HJB equation of \eqref{eq17}, thus concluding the proof of the first part of the theorem.

Regarding the second part of the theorem, we start by letting $\nu \in \Acal(t,x)$ and apply Itô's lemma to 
\begin{equation*}
\textrm{exp} \left(-\int_t^{s} \lambda(v)~\rmd v \right) V(s,X(s)) \ .
\end{equation*}
We obtain that
\begin{eqnarray*}
V(t,x) & = & \textrm{exp}\left(-\int_t^{T} \lambda(v)~\rmd v \right) W(X(T))\nonumber\\
& & -\int_t^T \textrm{exp} \left(-\int_t^{u} \lambda(v)~\rmd v \right)  \left(a(u,X(u)) - \lambda(u)V(u,X(u)) \right)\;\rmd u  \\
& & - \sum_{m=1}^{M} \int_t^{T} \textrm{exp} \left(-\int_t^{u} \lambda(v)~\rmd v \right) b_m(u,X(u)) \; \rmd W_m(u) \nonumber \ ,
\end{eqnarray*}
where $a(t,x)$ and $b_m(t,x)$, $m=1,\ldots,M$ are as given in \eqref{a_b}. 
From the previous equality, we get
\begin{eqnarray}\label{dem2.1}
V(t,x) & = & E_{t,x} \bigg[\textrm{exp} \left(-\int_t^{T} \lambda(v)~\rmd v \right) W(X(T)) \nonumber \\
& & - \int_t^T \textrm{exp} \left(-\int_t^{u} \lambda(v)~\rmd v \right)\left(a(u,X(u)) - \lambda(u)V(u,X(u)) \right) \;\rmd u \;\Bigl|\;\mathcal{F}_t\Biggr]  \ .
\end{eqnarray}
From the definition of the hazard function in \eqref{eq9}, we obtain that the conditional probability $\overline{F}(s,t)$ for the wage earner's death to occur at time $s$ conditional upon the wage earner being alive at time $t \leq s$ is given by
\begin{equation}\label{eqF1}
\overline{F}(s,t) = \textrm{exp} \left(-\int_t^{s} \lambda(v)~\rmd v \right) \ .
\end{equation}
Similarly, we obtain that the conditional probability density function $f(s,t)$ for the death to occur at time $s$ conditional upon the wage earner being alive at time $t\leq s$ is given by
\begin{equation}\label{eqF2}
f(s,t) = \lambda(s)\textrm{exp} \left(-\int_t^{s} \lambda(v)~\rmd v \right) \ .
\end{equation}
Using \eqref{eqF1} and \eqref{eqF2}, we rewrite \eqref{dem2.1} as
\begin{equation}\label{dem2.2}
V(t,x)  =  E_{t,x} \bigg[\overline{F}(T,t) W(X(T))  - \int_t^T \overline{F}(u,t)\left(a(u,X(u)) - \lambda(u)V(u,X(u)) \right) \;\rmd u \;\Bigl|\;\mathcal{F}_t\Biggr]  \ .
\end{equation}
Using Lemma \ref{lemma1}, we rearrange \eqref{dem2.2} to obtain
\begin{eqnarray}\label{dem2.3}
\lefteqn {V(t,x) =  J(t,x;\nu) } \nonumber\\
 && - E_{t,x}\left[~\int_t^T \overline{F}(u,t) \left(V_t(u,X(u)) - \lambda(u)V(u,X(u)) + \Hcal(u,X(u);\nu) \right)\;\rmd u\;\Bigl|\;\mathcal{F}_t\right] \ .
\end{eqnarray}

Consider now an optimal admissible strategy $\nu^*=(c^*, p^*, \theta^*)$ whose corresponding wealth is $X^*$. From equation \eqref{dem2.3} we have that
\begin{eqnarray}\label{dem2.5}
\lefteqn{V(t,x) = J(t,x;\nu^*)} \nonumber \\
&& - E_{t,x}\left[~\int_t^T \overline{F}(u,t) \left(V_t(u,X^*(u)) - \lambda(u)V(u,X^*(u)) + \Hcal(u,X^*(u);\nu^*) \right)\;\rmd u\;\Bigl|\;\mathcal{F}_t\right]
\end{eqnarray}
and from the HJB equation \eqref{eq17}, we have
\begin{eqnarray}\label{dem3}
V_t(u,X^*(u)) - \lambda(u)V(u,X^*(u)) + \Hcal(u,X^*(u);\nu^*) \leq 0.
\end{eqnarray}
Combining \eqref{dem2.5} and \eqref{dem3}, we obtain that $\nu^*$ is optimal if and only if the value function $V$ satisfies \eqref{eq18}, which concludes the proof.
\end{proof}

The second part of the theorem above provides a clear approach for the computation of optimal insurance, portfolio and consumption strategies. In particular, we obtain the existence of such optimal strategies under rather weak conditions on the utility functions.

\begin{corollary}
Suppose that the maximum expected utility $V$ is of class $C^2$ and that the utility functions $U$ and $B$ are strictly concave with respect to their first variable. Then the Hamiltonian function $\Hcal$ has a regular interior maximum $\nu^*=(c^*,p^*,\theta^*)\in\Acal(t,x)$.
\end{corollary}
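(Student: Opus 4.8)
The plan is to treat the inner maximization $\sup_{\nu\in\Acal(t,x)}\Hcal(t,x;\nu)$, with $\nu=(c,p,\theta)\in\R^{N+2}$, as a smooth finite-dimensional concave program: exhibit its critical point through the first-order (stationarity) conditions and certify regularity by computing the Hessian. The structural observation that makes everything explicit is that $\Hcal(t,x;\nu)$ separates additively in the three groups of controls once the pieces linear in $\nu$ are collected: the consumption rate enters only through $U(c,t)-c\,V_x(t,x)$, the premium rate only through $\lambda(t)\,B\!\left(x+p/\eta(t),t\right)-p\,V_x(t,x)$, and the portfolio only through the concave quadratic-affine expression $x\,V_x(t,x)\,\theta^{T}\alpha(t)+\tfrac{x^{2}}{2}\,V_{xx}(t,x)\,\theta^{T}\sigma(t)\sigma(t)^{T}\theta$. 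Consequently $\nabla_\nu\Hcal(t,x;\nu^{*})=0$ splits into
\begin{equation*}
U_c(c^{*},t)=V_x(t,x),\qquad \frac{\lambda(t)}{\eta(t)}\,B_Z\!\left(x+\frac{p^{*}}{\eta(t)},t\right)=V_x(t,x),
\end{equation*}
together with the linear system $x\,V_x(t,x)\,\alpha(t)+x^{2}\,V_{xx}(t,x)\,\sigma(t)\sigma(t)^{T}\theta^{*}=0$.

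The next step is to solve these equations. Since the value function is increasing in wealth, $V_x(t,x)>0$, and since $U$ is strictly concave in its first argument the map $c\mapsto U_c(c,t)$ is strictly decreasing; hence, given the customary Inada behaviour of $U_c$ at the endpoints of its domain, the first equation determines a unique $c^{*}>0$, and in the same way strict concavity of $B$ determines a unique $p^{*}$ with $x+p^{*}/\eta(t)>0$. For the portfolio, $\sigma\sigma^{T}$ is nonsingular and $V$ is strictly concave in wealth, so $V_{xx}(t,x)<0$ and the linear system has the unique explicit solution
\begin{equation*}
\theta^{*}=-\,\frac{V_x(t,x)}{x\,V_{xx}(t,x)}\,\big(\sigma(t)\sigma(t)^{T}\big)^{-1}\alpha(t).
\end{equation*}

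Finally I would verify regularity. The Hessian $D^{2}_{\nu}\Hcal(t,x;\nu^{*})$ is block diagonal, with scalar blocks $U_{cc}(c^{*},t)$ and $\lambda(t)\,\eta(t)^{-2}\,B_{ZZ}(x+p^{*}/\eta(t),t)$ and the $N\times N$ block $x^{2}\,V_{xx}(t,x)\,\sigma(t)\sigma(t)^{T}$. The two scalars are strictly negative by the strict concavity of $U$ and $B$, while the matrix block is negative definite since $\sigma\sigma^{T}$ is positive definite and $V_{xx}(t,x)<0$; thus $D^{2}_{\nu}\Hcal(t,x;\nu^{*})$ is negative definite and $\nu^{*}$ is a non-degenerate (regular) critical point. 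Because $\Hcal(t,x;\cdot)$ is, for the same reasons, globally concave on $\R^{N+2}$ (a sum of a term concave in $c$, a term concave in $p$, a concave quadratic in $\theta$, and terms affine in $\nu$), this critical point is the unique maximizer, which is the assertion.

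The step I expect to be the real obstacle is not any of these computations but the confirmation that $\nu^{*}$ is genuinely \emph{interior} — that it lies in $\Acal(t,x)$ rather than on its boundary. This is exactly where the qualitative hypotheses must enter: Inada-type conditions on $U$ and $B$ ensuring that $U_c$ and $B_Z$ sweep out all of $(0,\infty)$ (so that $c^{*}>0$ and $x+p^{*}/\eta(t)>0$), together with the strict monotonicity and strict concavity of the value function that yield $V_x>0$ and $V_{xx}<0$. Both are available under the structural assumptions on the problem and, in particular, are automatically satisfied in the discounted CRRA setting analyzed in the following section.
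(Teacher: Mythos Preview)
Your approach is essentially the same as the paper's: decompose the Hamiltonian additively into the $c$-, $p$-, and $\theta$-parts, write down the first-order conditions, and verify negative definiteness of the block-diagonal Hessian. The first-order conditions and Hessian blocks you obtain match the paper's exactly.

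There is one substantive difference. You assert $V_{xx}(t,x)<0$ as a consequence of ``strict concavity of the value function'', treating it as a structural hypothesis available from the problem. But the corollary's hypotheses only say $V\in C^{2}$; concavity of $V$ is not assumed. The paper actually \emph{derives} $V_{xx}<0$ at this point by a short contradiction argument using the HJB equation: if $V_{xx}(t,x)>0$ then the $\theta$-part of $\Hcal$ is a convex quadratic in $\theta$ and hence unbounded above, so $\sup_{\nu}\Hcal(t,x;\nu)=+\infty$, which via the HJB identity forces $V_t$ or $V$ to be infinite and contradicts $V\in C^{2}$. You should include this argument rather than assume the conclusion. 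Conversely, your discussion of Inada-type conditions and interiority goes beyond what the paper checks; the paper simply exhibits the stationary point and the second-order test and does not address existence of solutions to the first-order equations in this generality.
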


\begin{proof}
Using the second part of theorem \eqref{optimal}, an optimal admissible strategy $\nu^*=(c^*, p^*, \theta^*)$ with wealth process $X^*$ must satisfy \eqref{eq18}. Therefore, $\nu^*$ must be such that $\Hcal$ attains its maximum value. We start by remarking that the condition to obtain the maximum for $\Hcal$ decouples into three independent conditions, as seen in the following:
\begin{eqnarray} \label{eqa}
\underset{\nu}{\textrm{sup}} \; \Hcal (t,x;\nu) & = & (r(t)x + i(t))V_x(t,x) + \underset{c}{\textrm{sup}} \;\bigg\{U(c,t) - cV_x(t,x) \bigg\} \nonumber \\
& & + \underset{p}{\textrm{sup}}\; \bigg\{\lambda(t) B\bigg(x + \frac{p}{\eta(t)},t\bigg) - pV_x(t,x) \bigg\}  \\
& & + \underset{\theta}{\textrm{sup}}\; \bigg\{\frac{x^2}{2} \sum_{m=1}^{M} \left( \sum_{n=1}^{N} \theta_n \sigma_{nm}(t)\right)^2 V_{xx}(t,x)
+ \sum_{n=1}^{N} \theta_n(\mu_n(t) - r(t))xV_x(t,x) \bigg\} \nonumber\ .
\end{eqnarray}
Therefore, it is enough to study the variation of $\Hcal$ with respect to each one of the variables $c$, $p$ and $\theta$ independently. Thus, computing the first-order conditions for a regular interior maximum of $\Hcal$ with respect to $c$, $p$ and $\theta$ we obtain, respectively, the following three conditions
\begin{eqnarray}\label{eqe}
-V_x(t,x)+ U_c(c^*,t) &=& 0 \nonumber\\
-V_x(t,x) + \frac{\lambda(t)}{\eta(t)}B_Z\bigg(x+\frac{p^*}{\eta(t)},t\bigg) &=& 0  \\
x V_x(t,x) \alpha + x^2  V_{xx}(t,x) \sigma \sigma^T \theta^* &=& 0_{\R^N} \nonumber \ ,
\end{eqnarray}
where the subscripts in $U$ and $B$ denote differentiation with respect to each function's first variable, $\alpha$ denotes the risk premium \eqref{rp}, and $0_{\R^N}$ denotes the origin of $\R^N$.
Computing the second derivative with respect to each variable (or the Hessian matrix in the case of $\theta$), we obtain
\begin{eqnarray}\label{eqb}
\Hcal_{cc}(t,x;\nu^*) &=& U_{cc}(c^*,t) \nonumber \\
\Hcal_{pp}(t,x;\nu^*) &=& \frac{\lambda(t)}{\eta^2(t)}B_{ZZ}\bigg(x+\frac{p^*}{\eta(t)},t\bigg)  \\
\Hcal_{\theta \theta}(t,x;\nu^*) &=& x^2  V_{xx}(t,x)\sigma \sigma^T \nonumber \ .
\end{eqnarray}

Note that $\Hcal_{cc}(t,x;\nu^*)$ is negative since $U$ is strictly concave on its first variable and that $\Hcal_{pp}(t,x;\nu^*)$ is negative since $\lambda(t)$ is positive for every $0\le t\le T$ and $B$ is strictly concave on its first variable. To see that $\Hcal_{\theta \theta}(t,x;\nu^*)$ is negative definite, recall that $\sigma \sigma^T$ is assumed to be non-singular and, thus, positive definite. Moreover, note that $V_{xx}(t,x)$ must be negative: if  $V_{xx}(t,x)$ is positive, then $\Hcal$ would not be bounded above and as a consequence of the HJB equation either $V_t(t,x)$ or $V(t,x)$ would have to be infinity, contradicting the smoothness assumption on $V$. Therefore, $\Hcal_{\theta \theta}$ is negative definite and $\Hcal$ has a regular interior maximum.
\end{proof}

\section{The family  of discounted CRRA utilities}\label{CRRA}

In this section we describe the special case where the wage earner has the same discounted CRRA utility functions for the consumption of his family, the size of his legacy, and the size of his terminal wealth, i.e., from now on we assume that the utility functions are given by
\begin{equation}\label{UBW}
U(c,t) = \rme^{-\rho t}\frac{c^\gamma}{\gamma} \ , \qquad B(Z,t) = \rme^{-\rho t}\frac{Z^\gamma}{\gamma} \ , \qquad W(X) = \rme^{-\rho T}\frac{X^\gamma}{\gamma} \ ,
\end{equation}
where the risk aversion parameter $\gamma$ is such that $\gamma< 1$, $\gamma\ne 0$, and the discount rate $\rho$ is positive.

\subsection{The optimal strategies}
Using the optimality criteria provided in theorem \ref{optimal}, we obtain the following optimal strategies for discounted CRRA utility functions.
\begin{proposition}\label{Prop_optimal}
Let $\xi$ denote the non-singular square matrix given by $(\sigma \sigma^T)^{-1}$. The optimal strategies in the case of discounted constant relative risk aversion utility functions are given by
\begin{eqnarray*}\label{eq29}
c^*(t,x) &=& \frac{1}{e(t)} (x + b(t)) \\
p^*(t,x) &=& \eta(t)\left( \left( D(t) - 1 \right) x + D(t)b(t) \right)  \\
\theta^*(t,x) &=& \frac{1}{x (1 - \gamma)}(x+b(t)) \xi \alpha(t) \ ,
\end{eqnarray*}
where
\begin{eqnarray*}
b(t) &=& \int_t^T i(s) \exp \left(-\int_t^s r(v) + \eta(v) \; \rmd v \right) \rmd s \\
D(t) &=& \frac{1}{e(t)}\left(\frac{\lambda(t)}{\eta(t)} \right)^{1/(1-\gamma)} \\
e(t) &=& \exp\left(-\int_t^T H(v)	\; \rmd v \right) + \int_t^T \exp\left(-\int_t^s H(v) \;\rmd v \right) K(s)\; \rmd s \\
H(t) & = & \frac{\lambda(t) + \rho}{1-\gamma}  - \gamma \frac{\Sigma(t)}{(1-\gamma)^2 } - \frac{\gamma}{1-\gamma} (r(t) + \eta(t)) \\
K(t) &=& \frac{(\lambda(t))^{1/(1-\gamma)}}{(\eta(t))^{\gamma/(1-\gamma)}} + 1 \\
\Sigma(t) &=& \alpha^T(t) \xi \alpha(t) - \frac{1}{2} \|\sigma^T \xi \alpha(t)\|^2 \ .
\end{eqnarray*}
\end{proposition}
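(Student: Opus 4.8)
The plan is to combine the first-order conditions derived in the preceding Corollary with the sufficiency half of Theorem \ref{optimal}: it is enough to exhibit a $C^2$ function $V$ that solves the HJB equation \eqref{eq17}, to read off the Hamiltonian maximizer $\nu^*=(c^*,p^*,\theta^*)$ from the optimality conditions \eqref{eqe}, and to check that the resulting feedback strategies are admissible. Motivated by the homotheticity of the CRRA specification \eqref{UBW}, by the terminal condition $V(T,x)=W(x)=\rme^{-\rho T}x^\gamma/\gamma$, and by the fact that a deterministic income stream should enter through a ``human capital'' term, I would look for a solution of the separable form
\[
V(t,x)=\rme^{-\rho t}\,\frac{(x+b(t))^\gamma}{\gamma}\,(e(t))^{1-\gamma},
\]
where $b,e\colon[0,T]\to\R$ are $C^1$ functions to be determined, with $b(T)=0$ and $e(T)=1$ forced by the boundary condition.

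With this ansatz one has $V_x=\rme^{-\rho t}(x+b(t))^{\gamma-1}(e(t))^{1-\gamma}$ and $V_{xx}=(\gamma-1)\rme^{-\rho t}(x+b(t))^{\gamma-2}(e(t))^{1-\gamma}<0$, so the conditions \eqref{eqe} can be inverted explicitly: the first gives $c^*=(x+b(t))/e(t)$; the second, after isolating $x+p^*/\eta(t)$, gives $x+p^*/\eta(t)=D(t)(x+b(t))$, i.e. $p^*=\eta(t)((D(t)-1)x+D(t)b(t))$ with $D(t)$ as in the statement; and the third gives $\theta^*=-\tfrac{V_x}{xV_{xx}}\xi\alpha(t)=\tfrac{1}{x(1-\gamma)}(x+b(t))\xi\alpha(t)$. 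This already yields the three announced formulas. It then remains to substitute $V$ and these maximizers back into \eqref{eq17}. Using the decoupled form \eqref{eqa} of $\sup_\nu\Hcal$ — and noting that, since $\xi(\sigma\sigma^T)=I$, one has $\|\sigma^T\xi\alpha\|^2=\alpha^T\xi\alpha$, so the quadratic-in-$\theta$ contribution equals $\tfrac{\Sigma(t)}{1-\gamma}\rme^{-\rho t}(x+b(t))^\gamma(e(t))^{1-\gamma}$ — and dividing the whole identity by the common factor $\rme^{-\rho t}(x+b(t))^\gamma(e(t))^{1-\gamma}$, the HJB equation separates into a part that, after writing $(r(t)+\eta(t))x=(r(t)+\eta(t))(x+b(t))-(r(t)+\eta(t))b(t)$, carries the factor $1/(x+b(t))$ times an affine function of $x$, and a part independent of $x$; both must vanish identically.

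The $x$-dependent remainder cancels precisely when $b$ solves the linear ODE $b'(t)=(r(t)+\eta(t))b(t)-i(t)$ with $b(T)=0$, whose variation-of-constants solution is exactly the displayed integral for $b(t)$ (the discount rate $r+\eta$ reflecting that the wage earner can capitalize future income by selling insurance). The surviving $x$-free terms, after multiplying by $\gamma e(t)/(1-\gamma)$ and identifying the bracketed coefficient with $H(t)$ and the inhomogeneous term with $K(t)$, collapse into the linear ODE $e'(t)=H(t)e(t)-K(t)$ with $e(T)=1$, whose variation-of-constants solution is precisely the displayed formula for $e(t)$. Since $K\equiv 1+\lambda^{1/(1-\gamma)}\eta^{-\gamma/(1-\gamma)}>0$ this representation is automatically strictly positive, so $V$, $D$, and the feedback laws are well defined (note $x+b(t)\ge 0$ because $i\ge 0$ and $x\ge 0$). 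Finally one inserts $c^*,p^*,\theta^*$ into \eqref{eq5}: the closed-loop wealth equation is linear in $X^*$ with continuous coefficients, hence has a unique strong solution, and one verifies $X^*(t)+b(t)>0$ (equivalently $X^*(t)\ge 0$) together with the integrability of $c^*$, $p^*$, $\theta^*$, so that $\nu^*\in\Acal(t,x)$; the sufficiency direction of Theorem \ref{optimal} then certifies optimality.

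The main obstacle is bookkeeping rather than conceptual: keeping the factors $\rme^{-\rho t}$, $(x+b(t))^\gamma$, and $(e(t))^{1-\gamma}$ straight while differentiating $V$ in $t$ and while evaluating the three suprema, and then cleanly splitting the resulting identity into the $b$-equation and the $e$-equation. A secondary point that requires care is the admissibility verification for the explicit $\nu^*$, in particular the nonnegativity of the optimal wealth process and the integrability conditions implicit in the definition of $\Acal(t,x)$.
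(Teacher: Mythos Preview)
Your argument is correct and follows essentially the same route as the paper: write the candidate maximizers $(c^*,p^*,\theta^*)$ via the first-order conditions \eqref{eqe}, plug them into the HJB equation, posit a separable ansatz of CRRA type, and reduce to two decoupled linear first-order ODEs for $b$ and $e$ with the stated terminal data. The only cosmetic difference is that the paper first takes the intermediate ansatz $V(t,x)=\frac{a(t)}{\gamma}(x+b(t))^\gamma$ and then substitutes $a(t)=\rme^{-\rho t}(e(t))^{1-\gamma}$, whereas you fold these two steps into one; you also add an explicit (brief) admissibility check for the feedback strategy, which the paper's proof does not spell out.
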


\begin{proof}
Assume that the utility functions $U$, $B$ and $W$ are as given in \eqref{UBW}. Using the first order conditions in \eqref{eqe}, we obtain that the optimal strategies depending on the value function $V$ are given by
\begin{eqnarray}\label{eqh}
c^*(t,x) &=& \left(e^{\rho t} V_x(t,x)\right)^{-1/(1-\gamma)} \nonumber \\
p^*(t,x) &=& \eta(t)\left(\left( \frac{\eta(t) e^{\rho t} V_x(t,x) }{\lambda(t)}\right)^{-1/(1-\gamma)} - x\right) \\
\theta^*(t) &=& - \frac{V_x(t,x)}{x V_{xx}(t,x)} \xi \alpha(t) \nonumber \ .
\end{eqnarray}

We are now going to find an explicit solution for the HJB equation \eqref{eq17}. We substitute $c$, $p$ and $\theta$ in the HJB equation by the optimal strategies in \eqref{eqh} and combine similar terms to arrive at the following partial differential equation
\begin{eqnarray}\label{eqk}
\lefteqn{ V_t(t,x) - \lambda(t) V(t,x) + \left((r(t) + \eta(t))x + i(t)\right) V_x(t,x)} \nonumber \\
&& -  \Sigma(t)  \frac{(V_x(t,x))^2}{V_{xx}(t,x)}
 + \frac{1-\gamma}{\gamma} e^{-\rho t/(1-\gamma)} K(t)
(V_x(t,x))^{-\gamma/(1-\gamma)} = 0  \ ,
\end{eqnarray}
where $\Sigma(t)$ and $K(t)$ are as given in the statement of this proposition and the terminal condition is given by
\begin{equation}\label{TC}
V(T,x) = W(x) \ .
\end{equation}
We consider an ansatz of the form
\begin{eqnarray}\label{eqm}
V(t,x) = \frac{a(t)}{\gamma}(x + b(t))^\gamma \ ,
\end{eqnarray}
and substitute it in \eqref{eqk} so that $a(t)$ and $b(t)$ are determined by the differential equation
\begin{eqnarray*}\label{eqn}
\lefteqn{\frac{1}{\gamma} \frac{\textrm{d}a(t)}{\textrm{d}t} + \frac{a(t)}{x + b(t)}\frac{\textrm{d}b(t)}{\textrm{d}t} - \lambda(t) \frac{a(t)}{\gamma} + \frac{[(r(t) + \eta(t))x + i(t)] a(t)}{x + b(t)}} \nonumber\\
&& + \Sigma(t) \frac{a(t)}{1-\gamma} + \frac{1-\gamma}{\gamma} e^{-\rho t/(1-\gamma)} K(t) (a(t))^{-\gamma/(1-\gamma)} = 0  \ .
\end{eqnarray*}
Note now that the previous differential equation and the terminal condition \eqref{TC} decouples into two independent boundary value problems for $a(t)$ and $b(t)$ which are given, respectively, by
\begin{eqnarray}\label{eqp}
&&\frac{1}{\gamma} \frac{\textrm{d}a(t)}{\textrm{d}t} + \left(r(t) + \eta(t) -  \frac{\lambda(t)}{\gamma}+\frac{\Sigma(t)}{1-\gamma}\right) a(t)  
+ \frac{1-\gamma}{\gamma} e^{-\rho t/(1-\gamma)} K(t) (a(t))^{-\gamma/(1-\gamma)}  = 0 \nonumber \\
&&a(T)=e^{-\rho T} \ ,
\end{eqnarray}
and
\begin{eqnarray}\label{eqo}
&&\frac{\textrm{d}b(t)}{\textrm{d}t} - (r(t) + \eta(t)) b(t) + i(t) = 0 \nonumber \\
&&b(T) = 0 \ .
\end{eqnarray}
To find a solution for the boundary value problem \eqref{eqp}, we write $a(t)$ in the form
\begin{equation*}\label{eqr}
a(t) = e^{-\rho t} (e(t))^{1-\gamma} \ ,
\end{equation*}
obtaining a new boundary value problem for the function $e(t)$ of the form
\begin{eqnarray}\label{eqt}
&&\frac{\textrm{d}e(t)}{\textrm{d}t} - H(t) e(t) + K(t) = 0 \nonumber \\
&&e(T) = 1 \ ,
\end{eqnarray}
where $K(t)$ and $H(t)$ are as given in the statement of this proposition.
Since equation \eqref{eqt} is a linear, non-autonomous, first order ordinary differential equation, it clearly has an explicit solution of the form
\begin{eqnarray*}\label{equ}
e(t) = \textrm{exp} \left(-\int_t^T H(v)~\textrm{d}v \right) + \int_t^T \textrm{exp} \left(-\int_t^s H(v)~\textrm{d}v \right) K(s)~~\textrm{d}s \ .
\end{eqnarray*}
Therefore, we obtain that the solution of \eqref{eqp} is given by
\begin{eqnarray}\label{eqv}
a(t) = e^{-\rho t} \left(\textrm{exp} \left(-\int_t^T H(v)~\textrm{d}v \right) +
\int_t^T \textrm{exp} \left(-\int_t^s H(v)~\textrm{d}v \right)~K(s)~~\textrm{d}s \right)^{1-\gamma} \ .
\end{eqnarray}
To find a solution for the boundary value problem \eqref{eqo}, we just note that this is again a linear, non-autonomous, first order differential equation and its solution is given by
\begin{eqnarray}\label{eqq}
b(t) = \int_t^T i(s)~\textrm{exp} \left(-\int_t^s r(v) + \eta(v)~\textrm{d}v \right)~\textrm{d}s 
\end{eqnarray}
as required.

Combining \eqref{eqh} with \eqref{eqm}, \eqref{eqv} and \eqref{eqq}, we obtain that the optimal strategies in the case of CRRA utilities are then given by
\begin{eqnarray*}\label{eqw}
c^*(t,x) &=& \frac{1}{e(t)} (x + b(t)) \nonumber \\
p^*(t,x) &=& \eta(t)((D(t)-1)x + D(t)b(t)) \\
\theta^*(t,x) &=& \frac{x+b(t)}{x (1 - \gamma)}\xi \alpha(t) \nonumber  \ ,
\end{eqnarray*}
where $D(t)$ is as given in the statement of this proposition, which concludes the proof.
\end{proof}

Note that the quantities $b(t)$ and $x+b(t)$ are of essential relevance for the definition of the optimal strategies in proposition \ref{Prop_optimal}. The quantity $b(t)$, that we will refer to as \emph{human capital} following the nomenclature introduced in \cite{Pliska_Ye_1}, should be seen as representing the fair value at time $t$ of the wage earner's future income from time $t$ to time $T$, while the quantity $x+b(t)$ should be thought of as the full wealth (present wealth plus future income) of  the wage earner at time $t$. It is then natural that these two quantities play a central role in the choice of optimal strategies, since they determine the present and future wealth available for the wage earner and his family.

\begin{remark}
Noting that $r(t)$, $\eta(t)$ and $i(t)$ are positive functions and considering the boundary value problem \eqref{eqq}, if $r(t)+\eta(t)$ is small enough, we can deduce that human capital function $b(t)$ has the following properties:
\begin{itemize}
\item[a)] it is a positive function for all $0\le t< T$;
\item[b)] it is concave.
\end{itemize}
Moreover, we have that $b(t)$ is either:
\begin{itemize}
\item[i)] a decreasing function for all $t\in[0,T]$; or
\item[ii)] a unimodal map of $t$, i.e. there exists some $t^*\in(0,T)$ such that $b(t)$ is increasing for all $0<t<t^*$, decreasing for all $t^*<t<T$; Furthermore, we have that the graph of $b(t)$ intersects the graph of the function $i(t)/(r(t)+\eta(t))$ at $t=t^*$. 
\end{itemize}
\end{remark}

From the explicit knowledge of the optimal strategies, several economically relevant conclusions can be obtained. See Fig. \ref{Poptimal_fig} for a graphical representation of the optimal life-insurance purchase as a function of age and ``full wealth'' $x+b(t)$ of the wage earner. We start by proving an auxiliary lemma before moving on to the statement and proof of some of the optimal strategies properties.

\begin{figure}[h!]
	\centering
      \psfrag{p}[cc][][0.85][0]{$p^*(t,x)$}%
      \psfrag{x+b}[cc][][0.85][0]{$x+b(t)$}
      \psfrag{t}[cc][][0.85][0]{$t$}
      \psfrag{0}[cc][][0.85][0]{$0$}
      \psfrag{1000}[cc][][0.85][0]{$1000$}
      \psfrag{2000}[cc][][0.85][0]{$2000$}
      \psfrag{3000}[cc][][0.85][0]{$3000$}
      \psfrag{30}[cc][][0.85][0]{$30$}
      \psfrag{40}[cc][][0.85][0]{$40$}
      \psfrag{50}[cc][][0.85][0]{$50$}
      \psfrag{60}[cc][][0.85][0]{$60$}
      \psfrag{5}[cc][][0.85][0]{$5$}
      \psfrag{10}[cc][][0.85][0]{$10$}
      \psfrag{-5}[cc][][0.85][0]{$-5$}
      \psfrag{-10}[cc][][0.85][0]{$-10$}
   		\includegraphics[width=90mm,angle=-90]{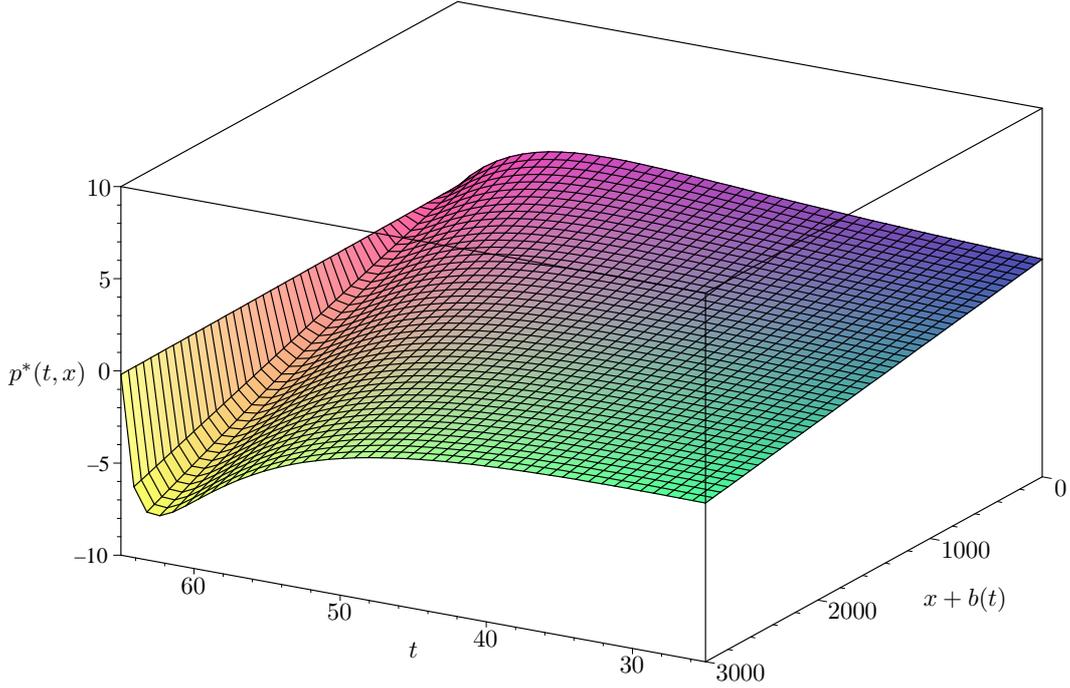}
   		\caption{The optimal life-insurance purchase for a wage earner that starts working at age 25 and retires 40 years later. The parameters of the model were taken as $N=M=2$, $i(t)=50000\exp(0.03t)$, $r=0.04$, $\rho=0.03$, $\gamma=-3$, $\lambda(t)=0.001+\exp(-9.5+0.1t)$, $\eta(t)=1.05\lambda(t)$, $\mu_1=0.07$, $\mu_2=0.11$, $\sigma_{11}=0.19$, $\sigma_{12}=0.15$, $\sigma_{21}=0.17$ and $\sigma_{22}=0.21$.}\label{Poptimal_fig}
\end{figure}

\begin{lemma}\label{lemmaD}
Suppose that for all $t\in[0,\min\{T,\tau\}]$ the following two conditions are satisfied:
\begin{itemize}
\item[a)] $\lambda(t)\le \eta(t)$;
\item[b)] $H(t)\le 1$.
\end{itemize}
Then, the inequality $D(t) < 1$ holds for every $t\in[0,\min\{T,\tau\}]$.
\end{lemma}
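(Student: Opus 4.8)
The plan is to reduce everything to the single inequality $e(t)>1$ on $[0,T)$ and then read off $D(t)<1$ from hypothesis (a). Since $\gamma<1$, the exponent $1/(1-\gamma)$ is strictly positive, so $s\mapsto s^{1/(1-\gamma)}$ is increasing on $[0,\infty)$; hypothesis (a), $\lambda(t)\le\eta(t)$, then gives $\bigl(\lambda(t)/\eta(t)\bigr)^{1/(1-\gamma)}\le 1$. Recalling from Proposition~\ref{Prop_optimal} that $D(t)=e(t)^{-1}\bigl(\lambda(t)/\eta(t)\bigr)^{1/(1-\gamma)}$, a lower bound $e(t)>1$ immediately forces $D(t)\le e(t)^{-1}<1$. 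So the whole content of the lemma is the estimate $e(t)>1$.

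To obtain that estimate I would work directly from the closed form
\[
e(t)=\exp\Bigl(-\int_t^T H(v)\,\rmd v\Bigr)+\int_t^T \exp\Bigl(-\int_t^s H(v)\,\rmd v\Bigr)K(s)\,\rmd s ,
\]
using two elementary facts. First, $K(s)=(\lambda(s))^{1/(1-\gamma)}(\eta(s))^{-\gamma/(1-\gamma)}+1>1$, because $\lambda$ and $\eta$ are strictly positive functions, so the first summand is positive. Second, hypothesis (b), $H\le 1$, gives $\int_t^s H(v)\,\rmd v\le s-t$ for all $t\le s\le T$, hence $\exp\bigl(-\int_t^s H(v)\,\rmd v\bigr)\ge e^{t-s}$, and likewise $\exp\bigl(-\int_t^T H(v)\,\rmd v\bigr)\ge e^{t-T}$. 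Plugging both in,
\[
e(t)\;>\;\exp\Bigl(-\int_t^T H(v)\,\rmd v\Bigr)+\int_t^T \exp\Bigl(-\int_t^s H(v)\,\rmd v\Bigr)\,\rmd s\;\ge\; e^{t-T}+\int_t^T e^{t-s}\,\rmd s\;=\;1,
\]
where the strict inequality uses $K(s)>1$ together with the fact that the $K$-integral runs over the nondegenerate interval $[t,T]$ when $t<T$, and the final equality uses $\int_t^T e^{t-s}\,\rmd s=1-e^{t-T}$. Combined with the first paragraph this yields $D(t)<1$ for all $t\in[0,T)$. (One could equally derive $e(t)>1$ from the linear ODE $e'-He+K=0$, $e(T)=1$, by a maximum-principle argument on $e(t)-1$, but the direct estimate above is shorter.)

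I do not expect a real obstacle here: the argument is a two-line calculation once one notices that $e$ is pinned above $1$ by the two competing effects ``$K>1$'' and ``$H\le 1$''. The one point that needs care is the boundary value $t=T$, where $e(T)=1$ and hence $D(T)=\bigl(\lambda(T)/\eta(T)\bigr)^{1/(1-\gamma)}$, which is strictly below $1$ only if $\lambda(T)<\eta(T)$; so for the conclusion to hold verbatim on the closed interval one should read hypothesis (a) with strict inequality at the endpoint, or simply restrict to the economically relevant range $[0,T)$. I would also point out that the closed form for $e(t)$ involves $H$ and $K$ on the whole of $[t,T]$, so hypotheses (a)--(b) are genuinely used on all of $[0,T]$, which is the intended reading of the statement.
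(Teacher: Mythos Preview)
Your argument is correct and follows essentially the same route as the paper: bound $e(t)$ from below by $1$ using $K>1$ together with $H\le 1$ in the closed-form integral expression, and then combine with $(\lambda/\eta)^{1/(1-\gamma)}\le 1$ from hypothesis~(a). Your treatment is in fact slightly more careful than the paper's, since you explicitly use $K>1$ (the paper only says ``$K$ positive'' but needs $K>1$ for the final strict inequality) and you flag the boundary issue at $t=T$.
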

\begin{proof}
Recall that $D(t)$ is given by
\begin{equation*}
D(t) = \frac{1}{e(t)}\left(\frac{\lambda(t)}{\eta(t)} \right)^{1/(1-\gamma)} \ .
\end{equation*}
Using condition b) and noting that $K(t)$ is positive for all $t\in[0,\min\{T,\tau\}]$, we have that
\begin{eqnarray*}
e(t) &=& \textrm{exp} \left(-\int_t^T H(v)\;\textrm{d}v \right) + \int_t^T \textrm{exp} \left(-\int_t^s H(v)\;\textrm{d}v \right) K(s)\;\textrm{d}s \\
     &\geq & \textrm{exp} \left(-\int_t^T 1\;\textrm{d}v \right) + \int_t^T \textrm{exp} \left(-\int_t^s 1\;\textrm{d}v \right) K(s)\;\textrm{d}s > 1 \ .
\end{eqnarray*}
Putting together the previous inequality and condition a), we obtain the required inequality.
\end{proof}

The next result provides a qualitative characterization of the optimal life insurance purchase strategy.
\begin{corollary}\label{cor_p_x_t}
Assume that the conditions of lemma \ref{lemmaD} are satisfied. Then, the optimal insurance purchase strategy $p^*(t,x)$ has the following properties:
\begin{itemize}
\item[a)] it is a decreasing function of the wealth $x$;
\item[b)] it is an increasing function of the wage earner's human capital $b(t)$;
\item[c)] it is negative for suitable pairs of wealth $x$ and ``age'' $t$;
\item[d)] if the wage earner's wealth $x$ is small enough and $\eta(t)$ is non-decreasing, the function $ t\mapsto p^*(t,x-b(t))$ has the same monotonicity as the human capital function $b(t)$. 
\end{itemize}
\end{corollary}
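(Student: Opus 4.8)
The plan is to work directly from the explicit formula
\[
p^*(t,x) = \eta(t)\bigl((D(t)-1)x + D(t)b(t)\bigr)
\]
obtained in Proposition \ref{Prop_optimal}, treating each of the four claims as an elementary consequence of the sign and monotonicity information we already have on $\eta$, $D$ and $b$. Under the hypotheses of Lemma \ref{lemmaD} we know $D(t) < 1$ for all $t$, and since $\eta(t) > 0$ this immediately gives the coefficient $\eta(t)(D(t)-1) < 0$ of $x$, yielding claim a). For claim b), the partial derivative of $p^*$ with respect to the human capital variable $b$ is $\eta(t)D(t)$; since $\eta(t) > 0$ and $D(t) = \frac{1}{e(t)}(\lambda(t)/\eta(t))^{1/(1-\gamma)} > 0$ (because $e(t) > 1 > 0$, $\lambda(t) > 0$, $\eta(t) > 0$), this derivative is strictly positive, so $p^*$ is increasing in $b(t)$.

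For claim c), I would rewrite $p^*(t,x) = \eta(t)(D(t)-1)(x + b(t)) + \eta(t)b(t)$, or more transparently just observe that for fixed $t$ the map $x \mapsto p^*(t,x)$ is affine with negative slope $\eta(t)(D(t)-1)$, hence takes negative values for all $x$ larger than the threshold $x_0(t) := \frac{D(t)b(t)}{1-D(t)} > 0$; since $b(t) \ge 0$ (from the Remark, or directly from \eqref{eqq} as $i, r, \eta \ge 0$) and $D(t) \in (0,1)$, this threshold is a genuine positive number, so such ``suitable pairs'' $(t,x)$ with $x > x_0(t)$ exist. For claim d), substitute $x = x - b(t)$... more precisely evaluate at the argument $x - b(t)$ in place of $x$: then
\[
p^*(t, x - b(t)) = \eta(t)\bigl((D(t)-1)(x-b(t)) + D(t)b(t)\bigr) = \eta(t)(D(t)-1)x + \eta(t)b(t).
\]
If the ambient wealth level $x$ is small (in the limiting sense, or small enough that the first term is dominated), the map $t \mapsto p^*(t,x-b(t))$ behaves like $t \mapsto \eta(t)b(t)$; with $\eta$ non-decreasing and $b$ having the monotonicity described in the Remark, a product/monotonicity argument shows $t \mapsto \eta(t)b(t)$ inherits the monotonicity type of $b$, and then one argues that adding the small $\eta(t)(D(t)-1)x$ correction does not destroy it.

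The main obstacle is claim d): making precise what ``small enough'' means and showing that the perturbation term $\eta(t)(D(t)-1)x$, which is itself a non-trivial function of $t$ through $\eta$ and $D$, does not spoil the monotonicity pattern of $t \mapsto \eta(t)b(t)$. One clean way to handle this is to differentiate: $\frac{d}{dt} p^*(t,x-b(t)) = \bigl(\eta(t)(D(t)-1)\bigr)' x + \bigl(\eta(t)b(t)\bigr)'$, and argue that on each of the (at most two) monotonicity intervals of $b$ the term $(\eta b)'$ has a definite sign bounded away from zero on compact subintervals, so that for $|x|$ below an explicit bound depending on $\sup_t |(\eta(D-1))'|$ the sign of the whole derivative is governed by $(\eta b)'$, hence by $b'$. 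The monotonicity of $\eta b$ from that of $b$ and non-decreasingness of $\eta$ is the standard observation that a product of a non-decreasing nonnegative function with a unimodal (resp. decreasing) nonnegative function is again unimodal (resp. decreasing) — on the increasing part of $b$ both factors increase, and on the decreasing part one invokes that $\eta b$ can cross from increasing to decreasing at most once given the shape constraints; this may require the same smallness of $r+\eta$ used in the Remark, which I would carry over as a standing hypothesis.
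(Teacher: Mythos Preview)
Your argument for parts a), b), and c) is correct and essentially identical to the paper's: the paper invokes Lemma~\ref{lemmaD} to get $0<D(t)<1$, reads off the signs of the coefficients of $x$ and $b(t)$ in $p^*(t,x)=\eta(t)\bigl((D(t)-1)x+D(t)b(t)\bigr)$, and for c) writes down the same threshold $x>\dfrac{D(t)}{1-D(t)}\,b(t)$ that you found.

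For part d) your substitution $p^*(t,x-b(t))=\eta(t)(D(t)-1)x+\eta(t)b(t)$ is exactly what underlies the paper's proof, but the paper is considerably terser: it simply says the claim ``follows from \eqref{cor_p_x_t_eq1} and the fact that $\eta(t)$ is non-decreasing'' without isolating the perturbation term or discussing how small $x$ must be. Your plan to control the sign of the $t$-derivative by bounding $\bigl(\eta(D-1)\bigr)'x$ against $(\eta b)'$ is a reasonable way to make that heuristic precise, and your honest flag about whether $\eta b$ truly inherits the monotonicity of $b$ on the decreasing branch is a legitimate caveat that the paper's one-line proof also leaves unaddressed. In short, you are following the paper's approach but with more care than the paper itself supplies.
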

\begin{proof}
Recall from proposition \ref{Prop_optimal} that the optimal insurance purchase strategy $p^*(t,x)$ is given by
\begin{eqnarray}\label{cor_p_x_t_eq1}
p^*(t,x) = \eta(t)\left( \left( D(t) - 1 \right) x + D(t)b(t) \right) 
\end{eqnarray}
for all $t\in[0,\min\{T,\tau\}]$. 

Items a) and b) follow from lemma \ref{lemmaD}, since $D(t)$ is a positive function such that $D(t)<1$ for all $t\in[0,\min\{T,\tau\}]$.

For the proof of item c), note that $p^*(t,x)$ is negative for all $(t,x)\in [0,T]\times\R^+$ such that
\begin{eqnarray*}
x &>& \frac{D(t)}{1-D(t)}b(t)  \\
 &=& \frac{\lambda(t)^{1/(1-\gamma)}}{e(t)\eta(t)^{1/(1-\gamma)} - \lambda(t)^{1/(1-\gamma)}} b(t) > 0 \ ,
\end{eqnarray*}
and positive otherwise.

Item d) follows from \eqref{cor_p_x_t_eq1} and the fact that $\eta(t)$ is a non-decreasing.
\end{proof}

Some comments regarding the assumptions in lemma \ref{lemmaD} (and corollary \ref{cor_p_x_t}) seem necessary. Starting with condition (a), the life insurance company must establish the premium-insurance $\eta(t)$ in such a way that $\lambda(t)\le \eta(t)$ in order to make a profit (the insurance policy being fair whenever $\lambda(t)\le \eta(t)$). Regarding condition (b), we note that the quantities $r$, $\rho$, $\eta$ and $\lambda$ are usually very small in the real world and, moreover, the relative risk aversion of the wage earner is negative in general. This is consistent with the assumption that $H(t)$ is bounded above by some positive constant.

Apart from studying how optimal life insurance purchase varies with age and wealth, it is also relevant to understand how the remaining parameters which define the financial and insurance markets influence life insurance purchase.

\begin{corollary}
With all other parameters, including $t$ and $x$ constant, the optimal life insurance purchase rate $p^*(t,x)$ is an increasing function of the discount rate $\rho$.
\end{corollary}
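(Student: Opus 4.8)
The plan is to trace how $p^*(t,x)$ depends on $\rho$ through the chain of auxiliary functions in Proposition~\ref{Prop_optimal} and to check that every link of that chain is monotone. First I would rewrite the optimal insurance rate as
\begin{equation*}
p^*(t,x) = \eta(t)\bigl( (D(t)-1)x + D(t)b(t)\bigr) = \eta(t)\,(x+b(t))\,D(t) - \eta(t)\,x \ ,
\end{equation*}
and then identify which of the quantities here actually carry a $\rho$. Inspecting the definitions: $\eta(t)$ and $\lambda(t)$ are exogenous, $b(t)$ is an integral involving only $i$ and $r+\eta$, and $K(t)$ involves only $\lambda$ and $\eta$ — none of these depends on $\rho$. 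The discount rate enters solely through $H(t)$, via its summand $\rho/(1-\gamma)$, and hence only through $e(t)$ and $D(t)=e(t)^{-1}(\lambda(t)/\eta(t))^{1/(1-\gamma)}$. Since $t$ and $x$ are held fixed, this already gives
\begin{equation*}
\frac{\partial p^*}{\partial \rho}(t,x) = \eta(t)\,(x+b(t))\,\frac{\partial D}{\partial \rho}(t) \ ,
\end{equation*}
so the problem reduces to signing $\partial D/\partial\rho$.

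Next I would differentiate $e(t)$. Because $\partial H(v)/\partial\rho = 1/(1-\gamma)>0$ (using $\gamma<1$), one has $\partial_\rho\!\int_t^s H(v)\,dv = (s-t)/(1-\gamma)\ge 0$, strictly for $s>t$. Differentiating under the integral sign in
\begin{equation*}
e(t) = \exp\!\left(-\int_t^T H(v)\,dv\right) + \int_t^T \exp\!\left(-\int_t^s H(v)\,dv\right)K(s)\,ds
\end{equation*}
yields
\begin{equation*}
\frac{\partial e}{\partial\rho}(t) = -\frac{T-t}{1-\gamma}\exp\!\left(-\int_t^T H(v)\,dv\right) - \int_t^T \frac{s-t}{1-\gamma}\exp\!\left(-\int_t^s H(v)\,dv\right)K(s)\,ds \ ,
\end{equation*}
which is strictly negative for $t<T$ since $K(s)>0$ and the exponential weights are positive. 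Consequently, because $(\lambda(t)/\eta(t))^{1/(1-\gamma)}$ is a positive $\rho$-independent constant,
\begin{equation*}
\frac{\partial D}{\partial\rho}(t) = -\frac{1}{e(t)^2}\left(\frac{\lambda(t)}{\eta(t)}\right)^{1/(1-\gamma)}\frac{\partial e}{\partial\rho}(t) > 0 \ .
\end{equation*}

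Finally I would combine this with the formula for $\partial p^*/\partial\rho$ above: since $\eta(t)>0$ and the full wealth $x+b(t)$ is strictly positive (as $x\ge 0$ for admissible wealth and $b(t)>0$ for $t<T$), we get $\partial p^*/\partial\rho(t,x)>0$, i.e.\ $p^*(t,x)$ is an increasing function of $\rho$ (with $\partial p^*/\partial\rho=0$ only in the degenerate case $t=T$, where $e(T)=1$ and $b(T)=0$). I do not expect a genuine obstacle: the only point requiring care is the bookkeeping of $\rho$-dependence — once it is clear that $\rho$ reaches $p^*$ exclusively through the term $\rho/(1-\gamma)$ inside $H$, the conclusion is forced by the sign of $1-\gamma$ together with the positivity of $K$ and of the exponential kernels.
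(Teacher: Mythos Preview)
Your proof is correct and follows essentially the same chain of reasoning as the paper: isolate the $\rho$-dependence in $H(t)$ via the term $\rho/(1-\gamma)$, deduce the monotonicity of $e(t)$ and hence of $D(t)$, and then pass to $p^*(t,x)=\eta(t)\bigl(D(t)(x+b(t))-x\bigr)$. The only stylistic difference is that the paper compares two values $\rho_1<\rho_2$ directly, whereas you differentiate with respect to $\rho$; the logical content is the same, and your version has the minor advantage of making explicit the use of $x+b(t)>0$, which the paper's argument needs but does not spell out.
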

\begin{proof}
To study the influence of the discount rate $\rho$ on the optimal optimal life insurance purchase, we consider two different values of $\rho$ and compare the corresponding values of the optimal life insurance purchase. We distinguish the functions associated with each of the two parameter values by their subscript.  

Assume that $\rho_1$ and $\rho_2$ are such that $\rho_1 < \rho_2$. Recall the definitions of $p^*(t,x)$, $H(t)$, $e(t)$ and $D(t)$ given in proposition \ref{Prop_optimal}. Then, it is clear that the inequalities
\begin{eqnarray*}
H_1(t) &<& H_2(t) \nonumber \\
e_1(t) &>& e_2(t) \nonumber \\
D_1(t) &<& D_2(t)  \nonumber 
\end{eqnarray*}
hold for all $t\in[0,\min\{T,\tau\}]$, where the subscripts correspond to $\rho_1$ and $\rho_2$ in an obvious manner. Rewriting $p^*(t,x)$ as 
\begin{equation*}
p^*(t,x) = \eta(t)(D(t)(x+b(t))-x) \ ,
\end{equation*}
it follows from the preceding inequalities that
\begin{equation*}
p_1^*(t,x) < p_2^*(t,x)
\end{equation*}
for all $t\in[0,\min\{T,\tau\}]$, concluding the proof of the statement.
\end{proof}

\begin{remark}
The variation of the optimal life insurance purchase rate $p^*(t,x)$ with respect to the interest rate $r(t)$, the risk aversion parameter $\gamma$, the hazard rate $\lambda(t)$ and the insurance premium-payout ratio $\eta(t)$ is non-trivial. However, by studying the function $p^*(t,x)$ given in proposition \ref{Prop_optimal} we can make the following observations:
\begin{itemize}
\item[i)] $p^*(t,x)$ is a decreasing function of the interest rate $r(t)$, except for large values of $x$ and $t$ close enough to $T$;
\item[ii)] $p^*(t,x)$ is a decreasing function of the risk aversion parameter $\gamma$, except for values of $t$ close enough to $T$;
\item[iii)] $p^*(t,x)$ is an increasing function of the hazard rate $\lambda(t)$ and the insurance premium-payout ratio $\eta(t)$ for small enough values of wealth $x$ and a decreasing function for large values of $x$.
\end{itemize}
\end{remark}

The extra risky securities in our model introduce novel features to the wage earner's portfolio management, as is exemplified in the following result.

\begin{corollary}\label{prop_opt_thetas}
Let $\xi$ denote the non-singular square matrix given by $(\sigma \sigma^T)^{-1}$ and let $(\xi \alpha(t))_n$ denote the $n$-th component of the vector $\xi \alpha(t)$. The optimal portfolio process $\theta^*(t,x)=\left(\theta_1^*,...,\theta_N^*\right)$ is such that for every $n\in\{1,...,N\}$:
\begin{itemize}
\item[a)] $\theta_n^*$ has the same sign as $(\xi \alpha(t))_n$;
\item[b)] $\theta_n^*$ is a decreasing function of the total wealth $x$ if $(\xi \alpha(t))_n>0$ for all $t\in[0,\min\{T,\tau\}]$ and an increasing function of $x$ if $(\xi \alpha(t))_n<0$ for all $t\in[0,\min\{T,\tau\}]$;
\item[c)] $\theta_n^*$ is an increasing function of the wage earner's human capital $b(t)$ if $(\xi \alpha(t))_n>0$ for all $t\in[0,\min\{T,\tau\}]$ and a decreasing function of $b(t)$ if $(\xi \alpha(t))_n<0$ for all $t\in[0,\min\{T,\tau\}]$.
\end{itemize}
Furthermore, for every $n,m\in\{1,...,N\}$ the following equalities hold
\begin{eqnarray*}
\underset{x\rightarrow 0^+}{\lim} \theta_n^*(t,x) &=& +\infty\hspace{1cm}   \qquad \underset{x\rightarrow 0^+}{\lim} \frac{\theta_n^*(t,x)}{\theta_m^*(t,x)} =  \frac{(\xi \alpha(t))_n}{(\xi \alpha(t))_m} \\
\underset{x\rightarrow \infty}{\lim} \theta_n^*(t,x) &=& \frac{(\xi \alpha(t))_n}{1 - \gamma} \qquad \hspace{0.3cm}
\underset{t\rightarrow T}{\lim} \theta_n^*(t,x) = \frac{(\xi \alpha(T))_n}{1 - \gamma}  \ .
\end{eqnarray*}
\end{corollary}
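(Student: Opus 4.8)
The plan is to derive everything directly from the closed form for the optimal portfolio obtained in Proposition \ref{Prop_optimal}, namely
\[
\theta_n^*(t,x) \;=\; \frac{x+b(t)}{x(1-\gamma)}\,(\xi\alpha(t))_n \;=\; g(t,x)\,(\xi\alpha(t))_n, \qquad n=1,\dots,N,
\]
where $g(t,x):=\dfrac{x+b(t)}{x(1-\gamma)}$. The key observation is that $g$ is strictly positive on $\{x>0\}\times[0,T)$: from the integral representation \eqref{eqq} and the positivity of $i$ one has $b(t)>0$ for $t\in[0,T)$, while $\gamma<1$ gives $1-\gamma>0$. (This positivity of $b$ is in fact unconditional; the smallness hypothesis on $r+\eta$ is only needed for the concavity and the unimodal/decreasing dichotomy of $b$.) Item a) is then immediate, since $\theta_n^*=g(t,x)(\xi\alpha(t))_n$ with $g>0$ forces $\theta_n^*$ to carry the sign of $(\xi\alpha(t))_n$.

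For items b) and c) I would rewrite $g(t,x)=\frac{1}{1-\gamma}\bigl(1+\frac{b(t)}{x}\bigr)$. For fixed $t$ this is a strictly decreasing function of $x>0$ because $b(t)>0$; multiplying by $(\xi\alpha(t))_n$ and splitting into the cases $(\xi\alpha(t))_n>0$ and $(\xi\alpha(t))_n<0$ yields b). For fixed $x>0$ the map $b\mapsto g$ is affine with positive slope $\frac{1}{x(1-\gamma)}$, so the same sign discussion yields c). Equivalently one may compute $\partial\theta_n^*/\partial x = -\frac{b(t)}{x^2(1-\gamma)}(\xi\alpha(t))_n$ and $\partial\theta_n^*/\partial b = \frac{1}{x(1-\gamma)}(\xi\alpha(t))_n$ and inspect their signs.

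The four limiting identities follow from the asymptotics of $g$. As $x\to 0^+$ one has $g(t,x)\to+\infty$, whence $\theta_n^*(t,x)\to+\infty$ when $(\xi\alpha(t))_n>0$ (in general the divergence has the sign of $(\xi\alpha(t))_n$). The ratio identity is actually an exact equality for every $x>0$, not merely a limit, because the common factor $g(t,x)$ cancels: $\theta_n^*(t,x)/\theta_m^*(t,x)=(\xi\alpha(t))_n/(\xi\alpha(t))_m$. As $x\to\infty$, $g(t,x)\to\frac{1}{1-\gamma}$, giving $\theta_n^*(t,x)\to(\xi\alpha(t))_n/(1-\gamma)$. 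Finally, as $t\to T$, continuity of the market coefficients (hence of $\alpha$ and of $\xi$) together with the boundary condition $b(T)=0$ from \eqref{eqo} give $g(t,x)\to\frac{1}{1-\gamma}$ and $\alpha(t)\to\alpha(T)$, so that $\theta_n^*(t,x)\to(\xi\alpha(T))_n/(1-\gamma)$.

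I do not expect a genuine obstacle here: the whole argument is elementary manipulation of the explicit formula. The only points deserving a word of care are (i) recording once and for all that $b(t)>0$ on $[0,T)$ and $1-\gamma>0$, so that the scalar prefactor $g$ never vanishes or changes sign, and (ii) noting that in the $x\to 0^+$ limit the sign of the blow-up is inherited from $(\xi\alpha(t))_n$, so the stated ``$+\infty$'' corresponds to the case $(\xi\alpha(t))_n>0$.
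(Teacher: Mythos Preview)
Your proof is correct and follows essentially the same approach as the paper: both derive items a)--c) and the four limits directly from the explicit formula $\theta_n^*(t,x)=\frac{x+b(t)}{x(1-\gamma)}(\xi\alpha(t))_n$ of Proposition~\ref{Prop_optimal} together with the positivity of $b(t)$. Your write-up simply supplies more detail than the paper's two-line justification, and your remarks that the ratio $\theta_n^*/\theta_m^*$ is exact for every $x$ and that the $x\to 0^+$ blow-up inherits the sign of $(\xi\alpha(t))_n$ are useful clarifications.
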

\begin{proof}
Items a), b) and c) on the first part of the corollary follow from the form of $\theta_n^*$, $n\in\{1,...,N\}$, given in the statement of proposition \ref{Prop_optimal} and positivity of $b(t)$.

The limiting behaviours on the second part of the corollary also follow from the form of $\theta_n^*$, $n\in\{1,...,N\}$.
\end{proof}

\begin{remark}
Corollary \ref{prop_opt_thetas} is a mutual fund result: the relative proportions among the risky securities are independent of all parameters except for the interest rate and the risky assets appreciation rates and volatilities since, for any $n,m\in\{1,...,N\}$ we have 
\begin{equation*}
\frac{\theta_n^*(t,x)}{\theta_m^*(t,x)} = \frac{(\xi \alpha(t))_n}{(\xi \alpha(t))_m} \ .
\end{equation*}
\end{remark}

The quantities $(\xi \alpha(t))_n$, $n\in\{1,...,N\}$ can be though as ``weighted risk premiums'' for the risky assets $S_1,\ldots,S_N$, where the weights are provided by (quadratic) functions on the coefficients of the matrix of risky-assets volatilities $\sigma$.  

Under the assumption that the ``weighted risk premiums'' $(\xi \alpha(t))_n$ are positive for all $n\in\{1,...,N\}$ and $t\in[0,\min\{T,\tau\}]$, we obtain the following interesting consequence of the previous corollary. 
\begin{corollary}
Let $\xi$ denote the non-singular square matrix given by $(\sigma \sigma^T)^{-1}$ and let $(\xi \alpha(t))_n$ denote the $n$-th component of the vector $\xi \alpha(t)$. Assume that for every $n\in\{1,...,N\}$ we have $(\xi \alpha(t))_n>0$ for all $t\in[0,\min\{T,\tau\}]$. Then the optimal strategy for wage earners with small enough wealth $x$ is to short the risk-free security and hold higher amounts of risky assets.
\end{corollary}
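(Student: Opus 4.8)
The plan is to read the conclusion directly off the closed-form expression for $\theta^*(t,x)$ in Proposition \ref{Prop_optimal} together with the budget constraint \eqref{soma_thetas}. First I would recall that for $n=1,\ldots,N$,
\[
\theta_n^*(t,x) = \frac{x+b(t)}{x(1-\gamma)}\,(\xi \alpha(t))_n ,
\]
and note that the scalar prefactor $\tfrac{x+b(t)}{x(1-\gamma)}$ is strictly positive: indeed $\gamma<1$ so $1-\gamma>0$, $x>0$ by admissibility, and $b(t)>0$ for every $t\in[0,\min\{T,\tau\})$ since $i$ is a positive function and the exponential weight in \eqref{eqq} is positive. Hence, under the standing hypothesis $(\xi \alpha(t))_n>0$ for all $n$, every $\theta_n^*(t,x)$ is positive, so all the wealth invested in risky securities is held long.

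Next I would invoke the constraint $\sum_{n=0}^N \theta_n^*(t,x)=1$ to write
\[
\theta_0^*(t,x) = 1 - \frac{x+b(t)}{x(1-\gamma)}\sum_{n=1}^N (\xi \alpha(t))_n ,
\]
and let $x\to 0^+$. Since $b(t)>0$, the factor $\tfrac{x+b(t)}{x(1-\gamma)}\to+\infty$, and $\sum_{n=1}^N(\xi \alpha(t))_n$ is a fixed positive number, so $\theta_0^*(t,x)\to-\infty$; this is just the limit $\lim_{x\to 0^+}\theta_n^*(t,x)=+\infty$ of Corollary \ref{prop_opt_thetas} transported to the risk-free coordinate. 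Consequently there is a threshold $\bar x(t)>0$ with $\theta_0^*(t,x)<0$ for all $0<x<\bar x(t)$, which is precisely a short position in the risk-free security, while simultaneously each $\theta_n^*(t,x)$ exceeds its large-wealth value $\tfrac{(\xi \alpha(t))_n}{1-\gamma}$ (again by Corollary \ref{prop_opt_thetas}), i.e. the wage earner holds strictly larger amounts of the risky assets than a wealthy one. If a single threshold valid for all $t$ in a compact subinterval of $[0,\min\{T,\tau\})$ is wanted, it suffices to observe that $t\mapsto b(t)$ and $t\mapsto\sum_{n=1}^N(\xi \alpha(t))_n$ are continuous and bounded below there by positive constants, so $\bar x$ may be chosen uniformly.

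There is essentially no obstacle: the statement is an immediate repackaging of Proposition \ref{Prop_optimal} and Corollary \ref{prop_opt_thetas}. The only point deserving a line of care is the strict positivity of $b(t)$ for $t<T$ (at $t=T$ one has $b(T)=0$ and the assertion degenerates, since then $\theta^*(T,x)=\tfrac{1}{1-\gamma}\xi\alpha(T)$ is independent of $x$), together with the elementary continuity/compactness remark if one insists on a threshold uniform in $t$.
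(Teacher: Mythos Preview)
Your proof is correct and follows essentially the same route as the paper: both rely on Corollary~\ref{prop_opt_thetas}, in particular the limit $\lim_{x\to 0^+}\theta_n^*(t,x)=+\infty$, together with the budget identity \eqref{soma_thetas} and the absence of borrowing constraints. The paper's proof is a one-line reference to that corollary, while you have spelled out the computation of $\theta_0^*$ and the threshold argument explicitly; this added detail (including the caveat at $t=T$) is welcome but does not constitute a different method.
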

\begin{proof}
The corollary follows from corollary \ref{prop_opt_thetas} and the fact that the wage earner has no budget limitations, thus allowing him to get into short positions on the risk free asset $S_0$ of arbitrary size.  
\end{proof}

We conclude this session with a result concerning some qualitative properties of the optimal consumption strategy. Its proof follows trivially from the form of the optimal consumption $c^*(t,x)$ given in proposition \ref{Prop_optimal}.
\begin{corollary}
The optimal consumption rate $c^*(t)$ is an increasing function of both the wealth $x$ and the human capital $b(t)$.
\end{corollary}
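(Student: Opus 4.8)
The plan is to read the result directly off the closed-form expression for the optimal consumption obtained in Proposition \ref{Prop_optimal}, namely
\[
c^*(t,x) = \frac{1}{e(t)}\,(x+b(t)) \ .
\]
The only point requiring a moment's thought is that the denominator $e(t)$ is strictly positive for every $t\in[0,T]$; granting this, the monotonicity statements in $x$ and in $b(t)$ are immediate.

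First I would record the positivity of $e(t)$. From its definition,
\[
e(t) = \exp\!\left(-\int_t^T H(v)\,\rmd v\right) + \int_t^T \exp\!\left(-\int_t^s H(v)\,\rmd v\right) K(s)\,\rmd s \ ,
\]
where $K(s) = (\lambda(s))^{1/(1-\gamma)}(\eta(s))^{-\gamma/(1-\gamma)} + 1$. Since $\lambda$ and $\eta$ are positive continuous functions and $\gamma<1$, the fractional powers are well defined and nonnegative, so $K(s)\ge 1>0$; the exponential weights are positive as well. Hence $e(t)$ is the sum of a strictly positive term and a nonnegative integral, and therefore $e(t)>0$ on $[0,T]$.

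With $e(t)>0$ in hand, it suffices to observe that for fixed $t$ the optimal consumption $c^*(t,x)$ is an affine function of the present wealth $x$ with positive slope $1/e(t)$, and likewise depends affinely on the human capital $b(t)$ with the same positive slope $1/e(t)$; equivalently, $\partial c^*/\partial x = \partial c^*/\partial b = 1/e(t) > 0$. Consequently $c^*$ is strictly increasing in $x$, strictly increasing in $b(t)$, and hence also strictly increasing in the full wealth $x+b(t)$, which is exactly the assertion.

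Regarding the main obstacle: there is essentially none of substance. The entire content is the explicit formula for $c^*$ from Proposition \ref{Prop_optimal} together with the sign of $e(t)$; the remaining step is a one-line differentiation, so the only thing to be careful about is that those two ingredients have genuinely been secured in the earlier proof.
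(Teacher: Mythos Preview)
Your argument is correct and matches the paper's approach: the paper simply states that the corollary ``follows trivially from the form of the optimal consumption $c^*(t,x)$ given in proposition \ref{Prop_optimal}'' without writing out a proof. Your explicit verification that $e(t)>0$ is a welcome addition that the paper leaves implicit.
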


\subsection{The interaction between life insurance purchase and portfolio management}

In this section we compare the optimal life-insurance strategies for a wage earner who faces the following two situations:
\begin{itemize}
\item[a)] in the first case, we assume that the wage earner has access to an insurance market as described above and that his goal is to maximize the combined utility of his family consumption for all $t \leq \min\{T,\tau\}$, his wealth at retirement date $T$ if he lives that long, and the value of his estate in the event of premature death. The optimal strategies for the wage earner in this setting are given in Proposition \ref{Prop_optimal}.
\item[b)] in the second case, we assume that the wage earner is without the opportunity of buying life insurance. His goal is to maximize the combined utility of his family consumption for all $t \leq \min\{T,\tau\}$ and his wealth at retirement date $T$ if he lives that long. Similarly to what we have done previously, we translate this situation to the language of stochastic optimal control and derive explicit solutions in the case of discounted CRRA utilities.
\end{itemize}

We concentrate on the case b) described above for the moment. Similarly to what was done in case a), this problem can be formulated by means of optimal control theory. The wage earner's goal is then to maximize a new cost functional subject to:
\begin{itemize}
\item the dynamics of the state variable, i.e., the dynamics of a wealth process $X^0(t)$ given by
\begin{equation*} \label{eq4_0}
X^0(t) = x + \int_0^t i(s)-c^0(s)\;\rmd s+\sum_{n = 0}^N\int_0^t\frac{\theta_n^0(s)X^0(s)}{S_n(s)}\rmd S_n(s) \ ,
\end{equation*}
where $t \in [0, \textrm{min}\{T,\tau\}]$ and $x$ is the wage earner's initial wealth.
\item constraints on the remaining control variables, i.e., the consumption process $c^0(t)$ and the reduced portfolio process $\theta^0(t)=\left(\theta_1^0(t),\cdots,\theta_N^0(t)\right)\in\R^{N}$; and
\item boundary conditions on the state variables.
\end{itemize}

Let us denote by $\Acal^0(x)$ the set of all admissible decision strategies, i.e. all admissible choices for the control variables $\nu^0 = (c^0,\theta^0)\in\R^{N+1}$. The dependence of $\Acal^0(x)$ on $x$ denotes the restriction imposed on the wealth process by the boundary condition $X^0(0)=x$.

The wage earner's problem can then be stated as follows: find a strategy $\nu^0=(c^0,\theta^0)\in\Acal^0(x)$ which maximizes the expected utility
\begin{eqnarray} \label{eq7_0}
V^0(x) = \underset{\nu^0 \in \Acal^0(x)}{\sup}\; E_{0,x} \Biggl[\int_0^{T \wedge \tau}U(c^0(s),s) \; \rmd s + W(X^0(T))I_{\{\tau>T\}} \;   \Biggr] \ ,
\end{eqnarray}
where $U(c^0,\cdot)$ is again the utility function describing the wage earner's family preferences regarding consumption in the time interval $[0,\min\{T,\tau\}]$ and $B(X^0,t)$ is the utility function for the terminal wealth at time $t=T\wedge\tau$. As before, we restrict ourselves to  the special case where the wage earner has the same discounted CRRA utility functions for the consumption of his family and his terminal wealth given in \eqref{UBW}. The optimal strategies are given in the next result

\begin{proposition}\label{Prop_optimal_0}
Let $\xi$ denote the non-singular square matrix given by $(\sigma \sigma^T)^{-1}$. The optimal strategies for problem \eqref{eq7_0} in the case where $U(c^0,\cdot)$ and $B(X^0,\cdot)$ are the discounted constant relative risk aversion utility functions in \eqref{UBW} are given by
\begin{eqnarray*}\label{eq29_0}
{c^0}^*(t,x) &=& \frac{1}{e^0(t)} (x + b^0(t)) \\
{\theta^0}^*(t,x) &=& \frac{1}{x (1 - \gamma)}(x+b^0(t)) \xi \alpha(t) \ ,
\end{eqnarray*}
where
\begin{eqnarray*}
b^0(t) &=& \int_t^T i(s) \exp \left(-\int_t^s r(v) \; \rmd v \right) \rmd s \\
e^0(t) &=& \exp\left(-\int_t^T H^0(v)	\; \rmd v \right) + \int_t^T \exp\left(-\int_t^s H^0(v) \;\rmd v \right)\; \rmd s \\
H^0(t) & = & \frac{\lambda(t) + \rho}{1-\gamma}  - \gamma \frac{\Sigma(t)}{(1-\gamma)^2 } - \frac{\gamma}{1-\gamma}r(t) \\
\end{eqnarray*}
and $\Sigma(t)$ is as given in the statement of proposition \ref{Prop_optimal}.
\end{proposition}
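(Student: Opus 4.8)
The plan is to mirror the proof of Proposition \ref{Prop_optimal}, since problem \eqref{eq7_0} is obtained from the original problem by simply deleting the life insurance control $p$ and the bequest term $B(Z(\tau),\tau)$. First I would repeat the dynamic programming / HJB derivation of Section \ref{SDP} in this simpler setting: with $\nu^0=(c^0,\theta^0)$ and the wealth dynamics $\rmd X^0(t) = (i(t)-c^0(t)+(r(t)+\sum_n\theta_n^0(\mu_n(t)-r(t)))X^0(t))\rmd t + \sum_n\theta_n^0 X^0(t)\sum_m\sigma_{nm}(t)\rmd W_m(t)$, the analogue of Theorem \ref{optimal} gives the HJB equation
\begin{equation*}
V^0_t - \lambda(t)V^0 + \sup_{\nu^0}\left[\left(i(t)-c^0+\Big(r(t)+\textstyle\sum_n\theta_n^0(\mu_n(t)-r(t))\Big)x\right)V^0_x + \frac{x^2}{2}\sum_m\Big(\sum_n\theta_n^0\sigma_{nm}(t)\Big)^2 V^0_{xx} + U(c^0,t)\right] = 0,
\end{equation*}
with terminal condition $V^0(T,x)=W(x)$. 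This is just \eqref{eq17} with the two $p$-dependent terms ($-pV_x$ and $\lambda(t)B(x+p/\eta(t),t)$) removed, so no new argument is needed — one invokes Theorem \ref{optimal} and its corollary verbatim after setting $p\equiv 0$.

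Next I would carry out the first-order conditions exactly as in \eqref{eqe}: maximizing over $c^0$ gives $U_c({c^0}^*,t)=V^0_x$, hence with the CRRA form \eqref{UBW}, ${c^0}^*(t,x)=(\rme^{\rho t}V^0_x(t,x))^{-1/(1-\gamma)}$; and maximizing over $\theta^0$ gives ${\theta^0}^*(t)= -\frac{V^0_x(t,x)}{xV^0_{xx}(t,x)}\xi\alpha(t)$, these being the $p$-free analogues of \eqref{eqh}. Substituting these back into the HJB equation and collecting terms (using the definition of $\Sigma(t)$ from Proposition \ref{Prop_optimal}) produces the PDE
\begin{equation*}
V^0_t - \lambda(t)V^0 + (r(t)x+i(t))V^0_x - \Sigma(t)\frac{(V^0_x)^2}{V^0_{xx}} + \frac{1-\gamma}{\gamma}\rme^{-\rho t/(1-\gamma)}(V^0_x)^{-\gamma/(1-\gamma)} = 0,
\end{equation*}
which is \eqref{eqk} with $\eta(t)$ replaced by $0$ in the drift coefficient and with $K(t)$ replaced by the constant $1$ (since dropping the bequest kills the term $(\lambda(t))^{1/(1-\gamma)}/(\eta(t))^{\gamma/(1-\gamma)}$ in $K$).

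Then I would use the same ansatz $V^0(t,x)=\frac{a^0(t)}{\gamma}(x+b^0(t))^\gamma$. As in the proof of Proposition \ref{Prop_optimal}, substitution decouples into two linear first-order ODE boundary value problems: $\frac{\rmd b^0}{\rmd t} - r(t)b^0(t)+i(t)=0$, $b^0(T)=0$, whose solution is the stated $b^0(t)=\int_t^T i(s)\exp(-\int_t^s r(v)\,\rmd v)\,\rmd s$; and, writing $a^0(t)=\rme^{-\rho t}(e^0(t))^{1-\gamma}$, the problem $\frac{\rmd e^0}{\rmd t}-H^0(t)e^0(t)+1=0$, $e^0(T)=1$, with $H^0(t)=\frac{\lambda(t)+\rho}{1-\gamma}-\gamma\frac{\Sigma(t)}{(1-\gamma)^2}-\frac{\gamma}{1-\gamma}r(t)$, whose explicit solution is $e^0(t)=\exp(-\int_t^T H^0(v)\,\rmd v)+\int_t^T\exp(-\int_t^s H^0(v)\,\rmd v)\,\rmd s$. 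Finally, substituting $V^0$ and its derivatives into the expressions for ${c^0}^*$ and ${\theta^0}^*$ yields ${c^0}^*(t,x)=\frac{1}{e^0(t)}(x+b^0(t))$ and ${\theta^0}^*(t,x)=\frac{x+b^0(t)}{x(1-\gamma)}\xi\alpha(t)$, as claimed. There is no real obstacle here beyond bookkeeping: the only point deserving a remark is that one must re-run the concavity argument of the corollary (which only used strict concavity of $U$ and negativity of $V^0_{xx}$, not the presence of $B$) to confirm the first-order conditions indeed give the maximum, so the verification of the ansatz is legitimate.
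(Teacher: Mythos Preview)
Your proposal is correct and follows exactly the route the paper intends: the paper in fact omits the proof entirely, stating only that it ``is of the same nature as the proofs of theorem \ref{optimal} and proposition \ref{Prop_optimal}''. Your bookkeeping (dropping the $p$-dependent terms from the Hamiltonian, which removes $\eta(t)$ from the drift coefficient and reduces $K(t)$ to $1$, and then running the same ansatz/ODE decoupling) is precisely the intended argument.
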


We skip the proof of the previous proposition, since it is of the same nature as the proofs of theorem \ref{optimal} and proposition \ref{Prop_optimal}. 

We should point that it would be preferable from an economic standpoint to maximize also the utility of wealth at the time of premature death. In fact, it is easy to check that this corresponds to the addition of a constraint of the form $p(t,x)=0$ to our original problem. However, the corresponding HJB equation would contain an extra term of the form $\lambda(t)\; x^\gamma /\gamma$. The presence of this additional term in the HJB equation makes it impossible for us to obtain closed form solutions. On the other hand, a strategy optimizing the final wealth at retirement should be close to an optimal strategy which maximizes also the wealth for the case of an eventual death of the wage earner before retirement time. We plan to address such a comparison in future research. 

If we make the wage earner income $i(t)$ equal to zero, then we get a solution which is close to Merton's classical solution, but it still depends on the hazard function $\lambda(t)$, i.e., even in the absence of a life insurance policy the uncertainty regarding wage earner's lifetime still plays a role on the determination of the optimal consumption and investment strategies.

Propositions \ref{Prop_optimal} and \ref{Prop_optimal_0} provide us with optimal portfolio processes for the two settings a) and b) described above. In the next theorem we show how these optimal portfolio processes compare. 

\begin{theorem}
Let $\xi$ denote the non-singular square matrix given by $(\sigma \sigma^T)^{-1}$ and $(\xi \alpha(t))_n$ the $n$-th component of the vector $\xi \alpha(t)$. For each $n\in\{1,...,N\}$, we have that ${\theta_n^0}^*(t,x)>\theta_n^*(t,x)$ if and only if $(\xi \alpha(t))_n>0$.
\end{theorem}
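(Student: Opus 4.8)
The plan is to compare the two optimal portfolio processes directly from their explicit formulas in Propositions \ref{Prop_optimal} and \ref{Prop_optimal_0}. Writing them out, we have
\begin{equation*}
\theta_n^*(t,x) = \frac{x+b(t)}{x(1-\gamma)}(\xi\alpha(t))_n, \qquad {\theta_n^0}^*(t,x) = \frac{x+b^0(t)}{x(1-\gamma)}(\xi\alpha(t))_n \ ,
\end{equation*}
so the sign of the difference ${\theta_n^0}^*(t,x) - \theta_n^*(t,x)$ is governed by the factor
\begin{equation*}
{\theta_n^0}^*(t,x) - \theta_n^*(t,x) = \frac{b^0(t)-b(t)}{x(1-\gamma)}(\xi\alpha(t))_n \ .
\end{equation*}
Since $x>0$ and $1-\gamma>0$ by hypothesis on the risk aversion parameter, the prefactor $1/(x(1-\gamma))$ is strictly positive, so the inequality ${\theta_n^0}^*(t,x)>\theta_n^*(t,x)$ is equivalent to $(b^0(t)-b(t))(\xi\alpha(t))_n>0$.

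The crux is therefore to show $b^0(t) > b(t)$ for all $t\in[0,T)$. This is immediate from the integral representations
\begin{equation*}
b(t) = \int_t^T i(s)\exp\left(-\int_t^s r(v)+\eta(v)\,\rmd v\right)\rmd s, \qquad b^0(t) = \int_t^T i(s)\exp\left(-\int_t^s r(v)\,\rmd v\right)\rmd s \ ,
\end{equation*}
together with the standing assumptions that $i(t)>0$, $\eta(t)>0$ and $\lambda(t)>0$. Indeed, for every $s\in(t,T]$ we have $\int_t^s \eta(v)\,\rmd v>0$, hence
\begin{equation*}
\exp\left(-\int_t^s r(v)+\eta(v)\,\rmd v\right) < \exp\left(-\int_t^s r(v)\,\rmd v\right) \ ,
\end{equation*}
and multiplying by the positive integrand $i(s)$ and integrating over $s\in[t,T]$ (the integrand being strictly positive on a set of positive measure whenever $t<T$) yields $b(t)<b^0(t)$. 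The economic reading is that the discounting of future income at the rate $r+\eta$ rather than $r$ reflects the ``mortality drag'' implicitly embedded in the insurance-enlarged market, so the human capital is smaller there.

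Combining the two observations: for fixed $(t,x)$ with $t<T$ and $x>0$, the quantity $b^0(t)-b(t)$ is strictly positive, so $(b^0(t)-b(t))(\xi\alpha(t))_n>0$ holds precisely when $(\xi\alpha(t))_n>0$; this gives both directions of the asserted equivalence. There is no real obstacle here — the statement is a transparent corollary of the closed-form expressions — the only points requiring a word of care are the strict positivity of $b^0-b$ (which needs $\eta>0$ together with $i$ being positive on a set of positive measure, both of which are standing hypotheses) and the sign of $1-\gamma$ (positive since $\gamma<1$). One should also note the endpoint behaviour: at $t=T$ both $b$ and $b^0$ vanish and the two portfolios coincide, consistently with the limit $\theta_n^*(t,x)\to(\xi\alpha(T))_n/(1-\gamma)$ recorded in Corollary \ref{prop_opt_thetas}.
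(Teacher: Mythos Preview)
Your proof is correct and follows essentially the same route as the paper: both compute the difference ${\theta_n^0}^*-\theta_n^*=\dfrac{(\xi\alpha(t))_n}{x(1-\gamma)}\bigl(b^0(t)-b(t)\bigr)$ directly from the closed-form expressions in Propositions~\ref{Prop_optimal} and~\ref{Prop_optimal_0}, then establish $b^0(t)>b(t)$ from the positivity of $\eta$ in the discounting integrals. The only superfluous item is the mention of $\lambda(t)>0$, which plays no role in this particular argument.
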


\begin{proof}
Recall the definitions of ${\theta^0}^*$ and $\theta^*$ from propositions \ref{Prop_optimal_0} and \ref{Prop_optimal}, respectively.
Note that
\begin{eqnarray}\label{theta_compar}
{\theta_n^0}^*(t,x) - \theta_n^*(t,x) & = & \frac{1}{x (1 - \gamma)}(x+b^0(t)) (\xi \alpha(t))_n - \frac{1}{x (1 - \gamma)}(x+b(t)) (\xi \alpha(t))_n \nonumber \\
& = & \frac{(\xi \alpha(t))_n}{x (1 - \gamma)}  (b^0(t)-b(t))  \ .
\end{eqnarray}
Using the definitions of $b^0(t)$ and $b(t)$ given in the statements of propositions \ref{Prop_optimal_0} and \ref{Prop_optimal}, respectively, we obtain that 
\begin{equation}\label{beta_compar}
b^0(t)-b(t) = \int_t^T i(s) \textrm{exp} \left(-\int_t^s r(v)\;\textrm{d}v \right) \left(1-\textrm{exp}
\left(-\int_t^s \eta(v)~\textrm{d}v \right) \right)\;\textrm{d}s \ .
\end{equation}
Since $\eta(t)$ is a positive function, we get that
\begin{equation*}
\int_t^s \eta(v)\;\textrm{d}v > 0
\end{equation*}
and therefore, the inequality
\begin{equation}\label{eta_ineq}
1-\textrm{exp} \left(-\int_t^s \eta(v)\;\textrm{d}v \right) > 0
\end{equation}
holds for all $0\le t\le s\le T$. 
Therefore, combining \eqref{theta_compar}, \eqref{beta_compar} and \eqref{eta_ineq} and recalling that $\gamma<1$, we obtain that the sign of ${\theta_n^0}^*(t,x) - \theta_n^*(t,x)$ is the same as the sign of $(\xi \alpha(t))_n$ for all $t\in[0,\min\{T,\tau\}]$.
\end{proof}

The economic implications of the theorem above are made clear in the following result.

\begin{corollary}\label{cons_cor}
Let $\xi$ denote the non-singular square matrix given by $(\sigma \sigma^T)^{-1}$ and $(\xi \alpha(t))_n$ the $n$-th component of the vector $\xi \alpha(t)$. Assume that for every $n\in\{1,...,N\}$ we have that $(\xi \alpha(t))_n>0$.
Then, the optimal portfolio of a wage earner with the possibility of buying a life insurance policy is more conservative than the optimal portfolio of the same wage earner if he does not have the opportunity to buy life insurance.
\end{corollary}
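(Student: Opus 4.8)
The plan is to read the claim directly off the preceding theorem together with the explicit forms of the two optimal portfolios. First I would recall from Propositions \ref{Prop_optimal} and \ref{Prop_optimal_0} that both optimal portfolios are scalar multiples of the common direction vector $\xi\alpha(t)$,
\[
\theta^*(t,x) = \frac{x+b(t)}{x(1-\gamma)}\,\xi\alpha(t), \qquad {\theta^0}^*(t,x) = \frac{x+b^0(t)}{x(1-\gamma)}\,\xi\alpha(t),
\]
so that ${\theta^0}^*(t,x) = \kappa(t,x)\,\theta^*(t,x)$ with scalar factor $\kappa(t,x) = (x+b^0(t))/(x+b(t))$. Since $i(t)\ge 0$ gives $b(t),b^0(t)\ge 0$, and since the proof of the preceding theorem shows $b^0(t)-b(t)>0$ for all $t\in[0,\min\{T,\tau\}]$, we get $\kappa(t,x)>1$ for every $x>0$. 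Thus the no-insurance portfolio is literally a leveraged (scaled-up) version of the insurance portfolio, pointing in the same direction but with strictly larger magnitude.

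Next I would make precise what ``more conservative'' means and conclude. Under the hypothesis $(\xi\alpha(t))_n>0$ for all $n$, the preceding theorem yields ${\theta_n^0}^*(t,x)>\theta_n^*(t,x)>0$ componentwise; combined with the budget constraint $\sum_{n=0}^N\theta_n(t)=1$ from \eqref{soma_thetas}, this forces ${\theta_0^0}^*(t,x)<\theta_0^*(t,x)$, i.e.\ the wage earner without life insurance holds a strictly smaller fraction of wealth in the risk-free security (possibly a short position) and a strictly larger fraction in every risky security. Equivalently, comparing the instantaneous variance of the diffusive part of the wealth process, $x^2\sum_{m=1}^M\bigl(\sum_{n=1}^N\theta_n\sigma_{nm}(t)\bigr)^2$, and using ${\theta^0}^*=\kappa(t,x)\theta^*$ with $\kappa(t,x)>1$, the no-insurance portfolio has strictly larger volatility at every wealth level and every time. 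Either formulation gives the statement: the wage earner who can purchase life insurance chooses the strictly more conservative portfolio.

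There is essentially no analytic obstacle here — the content is the economic interpretation of the strict inequality already proved in the theorem above, plus the elementary observation $b^0(t)>b(t)$. The only point deserving care is the choice of the ``right'' notion of conservativeness: the componentwise/short-position reading genuinely needs the sign hypothesis on $(\xi\alpha(t))_n$, whereas the volatility comparison follows verbatim from $\kappa(t,x)>1$ and does not use that hypothesis at all. I would therefore phrase the conclusion primarily through the volatility comparison and then specialize to the short-risk-free-security interpretation under the corollary's hypothesis, consistent with the discussion following Corollary \ref{prop_opt_thetas}.
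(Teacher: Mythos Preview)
Your argument is correct and aligns with the paper's treatment: the paper gives no explicit proof of this corollary, presenting it as an immediate economic reading of the preceding theorem (namely, ${\theta_n^0}^*>\theta_n^*$ for all $n$ under the sign hypothesis, whence via \eqref{soma_thetas} the risk-free share is larger when insurance is available). Your write-up is more detailed than the paper's, and your additional volatility-based interpretation via the scalar factor $\kappa(t,x)>1$ is a nice sharpening that, as you note, does not even require the sign hypothesis.
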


\section{Conclusions} \label{conclusion}

We have introduced a model for optimal insurance purchase, consumption and investment for a wage earner with an uncertain lifetime with an underlying financial market consisting of one risk-free security and a fixed number of risky securities with diffusive terms driven by multidimensional Brownian motion.

When we restrict ourselves to the case where the wage earner has the same discounted CRRA utility functions for the consumption of his family, the size of his legacy and his terminal wealth, we obtain explicit optimal strategies and describe new properties of these optimal strategies. Namely, we obtain economically relevant conclusions such as: (i) a young wage earner with smaller wealth has an optimal portfolio with larger values of volatility and higher expected returns; and (ii) a wage earner who can buy life insurance policies will choose a more conservative portfolio than a similar wage earner who is without the opportunity to buy life insurance.

It is worth noting that the model described in this paper can be improved by adding further ingredients such as, for instance, some form of stochasticity on the income function or the hazard rate, some constraints on the possibility of trading of stocks on margin, the existence of additional life insurance products on the market or even adding some correlation between the wage earner's mortality rate and the underlying financial market. We plan to address at least some of these issues in a future publication.

\section*{Acknowledgments}

We thank the Calouste Gulbenkian Foundation, PRODYN-ESF, POCTI, and POSI by FCT and Minist\'erio da Ci\^encia, Tecnologia e Ensino Superior, CEMAPRE, LIAAD-INESC Porto LA, Centro de Matem\'atica da Universidade do Minho and Centro de Matem\'atica da Universidade do Porto for their financial support. D. Pinheiro's research was supported by FCT - Funda\c{c}\~ao para a Ci\^encia e Tecnologia program ``Ci\^encia 2007''. I. Duarte's research was supported by FCT - Funda\c{c}\~ao para a Ci\^encia e Tecnologia grant with reference SFRH / BD / 33502 / 2008.

\bibliography{biblio}
\bibliographystyle{plain}

\end{document}